%%%%%%%%%%%%%%%%%%%% author.tex %%%%%%%%%%%%%%%%%%%%%%%%%%%%%%%%%%%
%
% sample root file for your "contribution" to a proceedings volume
%
% Use this file as a template for your own input.
%
%%%%%%%%%%%%%%%% Springer %%%%%%%%%%%%%%%%%%%%%%%%%%%%%%%%%%

\documentclass{svproc}
%
% RECOMMENDED %%%%%%%%%%%%%%%%%%%%%%%%%%%%%%%%%%%%%%%%%%%%%%%%%%%
%

% to typeset URLs, URIs, and DOIs
\usepackage{url}

\usepackage{subfigure}
\usepackage{amssymb}
\usepackage{amsmath}
\usepackage{graphicx}
\usepackage{colortbl,dcolumn}
\usepackage{booktabs}
\usepackage{xcolor}
\usepackage{threeparttable}

\usepackage{algorithm}
\usepackage{algorithmic}

\usepackage{amsthm}
\newtheorem{assumption}{Assumption}

\newcommand{\aspref}[1]{Assumption~\ref{#1}}
\newcommand{\defref}[1]{Definition~\ref{#1}}
\newcommand{\lemref}[1]{Lemma~\ref{#1}}

\newcommand{\propref}[1]{Proposition~\ref{#1}}
\newcommand{\corref}[1]{Corollary~\ref{#1}}
\newcommand{\figref}[1]{Fig.~\ref{#1}}

\newcommand{\secref}[1]{Section~\ref{#1}}

\newcommand{\algref}[1]{Algorithm~\ref{#1}}

\usepackage{mathtools} % \DeclareDelimiter etc

%% http://tex.stackexchange.com/questions/109585/declaring-a-math-operator-with-cursor-placement-e-g-expected-value

%% commands in revisions

%\newcommand{\delr}{\color{blue}\sout}

\usepackage{enumerate}
%\ifCLASSOPTIONcompsoc
  % IEEE Computer Society needs nocompress option
  % requires cite.sty v4.0 or later (November 2003)
  % \usepackage[nocompress]{cite}
%\else
  % normal IEEE
  % \usepackage{cite}
%\fi
% cite.sty was written by Donald Arseneau
% V1.6 and later of IEEEtran pre-defines the format of the cite.sty package

% *** GRAPHICS RELATED PACKAGES ***
%
%\ifCLASSINFOpdf
  % \usepackage[pdftex]{graphicx}
  % declare the path(s) where your graphic files are
  % \graphicspath{{../pdf/}{../jpeg/}}
  % and their extensions so you won't have to specify these with
  % every instance of \includegraphics
  % \DeclareGraphicsExtensions{.pdf,.jpeg,.png}
%\else
  % or other class option (dvipsone, dvipdf, if not using dvips). graphicx
  % will default to the driver specified in the system graphics.cfg if no
  % driver is specified.
  % \usepackage[dvips]{graphicx}
  % declare the path(s) where your graphic files are
  % \graphicspath{{../eps/}}
  % and their extensions so you won't have to specify these with
  % every instance of \includegraphics
  % \DeclareGraphicsExtensions{.eps}
%\fi
% graphicx was written by David Carlisle and Sebastian Rahtz. It is
% required if you want graphics, photos, etc. graphicx.sty is already
% installed on most LaTeX systems. The latest version and documentation
% can be obtained at:
% http://www.ctan.org/tex-archive/macros/latex/required/graphics/
% Another good source of documentation is "Using Imported Graphics in
% LaTeX2e" by Keith Reckdahl which can be found at:
% http://www.ctan.org/tex-archive/info/epslatex/
%

%\hyphenation{op-tical net-works semi-conduc-tor}

\graphicspath{{pdfs/},{images/},{authorphotos/}}

\renewcommand{\tilde}{\widetilde}
\renewcommand{\hat}{\widehat}
\renewcommand{\bar}[1]{\overline{#1}}
\renewcommand{\[}{[\![ }
\renewcommand{\]}{]\!] }

\begin{document}
\mainmatter              % start of a contribution
\title{Cooperative Filtering with Range Measurements: A Distributed Constrained Zonotopic Method}
\titlerunning{Cooperative Filtering}  % abbreviated title (for running head)
%                                     also used for the TOC unless
%                                     \toctitle is used
%
\author{Yu Ding\inst{1} \and Yirui Cong\inst{1}
Xiangke Wang\inst{1} \and Long Cheng\inst{2,3} }
\authorrunning{Yu Ding et al.} % abbreviated author list (for running head)
%
%%%% list of authors for the TOC (use if author list has to be modified)
\tocauthor{Yirui Cong, Xiangke Wang,  and Long Cheng}
\institute{College of Intelligence Science and Technology, National University of Defense Technology, Changsha 410073, China,\\
\email{xkwang@nudt.edu.cn},
%\\ WWW home page:
%\texttt{http://users/\homedir iekeland/web/welcome.html}
\and
School of Artificial Intelligence, University of Chinese Academy of Sciences, Beijing 100049, China.
\and
State Key Laboratory for Management and Control of Complex Systems, Institute of Automation, Chinese Academy of Sciences, Beijing 100190, China.}

\maketitle              % typeset the title of the contribution

\begin{abstract}
This article studies the distributed estimation problem of a multi-agent system with bounded absolute and relative range measurements.
Parts of the agents are with high-accuracy absolute measurements, which are considered as anchors;
the other agents utilize low-accuracy absolute and relative range measurements, each derives an uncertain range that contains its true state in a distributed manner.
Different from previous studies, we design a distributed algorithm to handle the range measurements based on extended constrained zonotopes, which has low computational complexity and high precision.
With our proposed algorithm, agents can derive their uncertain range sequentially along the chain topology, such that agents
with low-accuracy sensors can benefit from the high-accuracy absolute measurements of anchors and improve the estimation performance.
Simulation results corroborate the effectiveness of our proposed algorithm and verify our method can significantly improve the estimation accuracy.
\keywords{Set-membership estimation, constrained zonotope, absolute and relative measurements.}
\end{abstract}
\section{Introduction}\label{sec:Introduction}
In this work, we are interested in problems where parts of agents are equipped with high-accuracy absolute measurements, which are considered as anchors, while the other agents need to utilize low-accuracy absolute measurements and relative range measurements to derive an uncertain range that contains the true state of each agent in a distributed manner.

Recently, cooperative localization problems with absolute and relative measurements for multi-agent systems has received considerable research attention~\cite{2019Cooperative,2017Distributed}.
\begin{itemize}
\item
For Luenberger observer methods, in \cite{SelfLocalization2015}, a distributed self-localization algorithm was developed for multi-agent systems by using relative position measurements.
The results were extended to study the problem of distributed source localization by using bearing angle measurements in~\cite{2016Distributed}.
In \cite{2018Integrated}, distributed consensus filters were developed for a set of targets to utilize velocity and distance relative measurements to estimate the relative positions.
\item
For stochastic methods, as in \cite{Viegas2018}, distributed Kalman filters have been designed for a multi-agent system with linear  relative measurements based on the trace minimization of the covariance matrix. A similar problem has been investigated in \cite{2019distributed}, with different methods to derive suboptimal filter gains.
In~\cite{2018Cooperative}, nonlinear relative measurements have been considered in the cooperative filtering problem for multi-agent systems, where a recursive quadratic filter has been proposed to compute the exact value of the variance explicitly.
\end{itemize}

It is worth mentioning that the above-mentioned methods have limitations when facing unknown but bounded noises, which motivates us to focus on the guaranteed method, i.e., the set-membership filter (SMF), to solve the cooperative estimation problem of absolute and relative measurements.
To the best of our knowledge, few articles have investigated this problem.
In \cite{Silvestre2022}, the author proposed the constrained convex generator (CCG) as a generalization of the constrained zonotope, and applied  CCG to deal with the relative range measurement, which provides us much inspiration in this article.
However, the method proposed in \cite{Silvestre2022} only works with a known beacon, which implies its limitation facing the scene that anchors belong to an uncertain range.
In \cite{2023DSMF}, the first step of distributed SMF of taking general relative measurements into account has been taken, the authors provided a framework to derive uncertain range in a distributed manner, which is an important base of our work.
It should be pointed out that \cite{2023DSMF} mainly focused on the general framework analysis, but neglected the design of algorithms.
In \cite{2023SMFCooperative}, a distributed constrained zonotopic algorithm to  was developed based on the distributed framework proposed in \cite{2023DSMF}, but only linear relative measurements were considered.

Thus in this article, we study the set-membership estimation problem of multi-agent systems with absolute and relative range measurements, such that  agents with low absolute measurement accuracy sensors can benefit from relative range measurements with the agents with high accuracy absolute measurements.

The main contributions of this paper are summarized as follows:
\begin{itemize}
\item
We propose an extended constrained zonotopic algorithm to deal with the relative range measurement.
The proposed algorithm only utilizes local and neighborhood information, which can be  realized in a distributed manner.
\item
Based on our algorithm, agents with low-accuracy sensors can benefit from the high-accuracy absolute measurements of anchors along the chain topology, such that the estimation precisions can be efficiently improved.
\end{itemize}

\textit{Notations:}
For an uncertain variable $\mathbf{x} $, a realization  is defined as $\mathbf{x}(\omega)=x $.
The range of an uncertain variable is described by  its range:
$\[\mathbf{x} \]=\{ x(\omega):\omega \in \Omega \}$,
the conditional range of $\mathbf{x}$ given $\mathbf{y}=y$ is $\[ \mathbf{x}|y   \] $~\cite{Girish2013}.
A directed graph $\mathcal{G}=(\mathcal{V},\mathcal{E},\mathcal{A})$ is used to represent system topology.
 $\mathbb{N}_0 $ is the set of  natural number.
 $\mathbb{R}^n$ denotes the $n$-dimensional Euclidean space.
Given two sets $S_1$ and $S_2$ in an Euclidean space, the operation $\oplus  $ stands for the Minkowski sum of $S_1$ and $S_2$,
i.e., $S_1 \oplus S_2=\{s_1+s_2:s_1 \in S_1,~s_2 \in S_2 \}$.
The diameter of a set $S$ is defined by $d(S)=\underset{s_1,s_2 \in S}{\mathrm{sup}}  \| s_1-s_2  \|$.

%
%\subsection{Paper Organization}
The rest paper is organized as follows:
\secref{sec:System_Model} gives the system model and the problem description.
Main results are presented in \secref{sec:Main_Results}, with the distributed SMFing framework analyzed in \secref{sec:SMFing Framework Analysis} and constrained zonotopic algorithm designed in \secref{sec:State Estimation With Range/Bearing Data Based on Constrained Zonotope}.
Simulation examples  are provided in \secref{sec:Simulation} to corroborate our theoretical results.
Finally, \secref{sec:Conclusion} gives the concluding remarks of the paper.

\section{System Model and Problem Description}\label{sec:System_Model}
Consider a  multi-agent system, where each agent is identified by a  positive integer $i \in \{1,2,\ldots,N \}$ and described by the following dynamics equation:
%
%The dynamic of agent $i$  is represented by a discrete-time Linear Time Invariant (LTI) model
\begin{align}\label{eqn:agent_dynamic}
x_{i,k+1}=A_i x_{i,k}+B_i u_{i,k}+ w_{i,k},
\end{align}
where $k \in \mathbb{N}_0 $.
As a realization of $\mathbf{x}_{i,k}$, $x_{i,k} \in \[ \mathbf{x}_{i,k} \] \subseteq \mathbb{R}^{n_i}$ denotes the state at time instant $k$ for agent $i$;
$u_{i,k}$ denotes the input of agent $i$ at step $k$, which is assumed as a known signal in this article.
$w_{i,k} \in \[ \mathbf{w}_{i,k} \] \subseteq \mathbb{R}^{n_i}$ is the process noise.
%
%

%$f_i:D \subseteq \mathbb{R}^{n_i} \to \mathbb{R}^{n_i}  $ stands for the state transition function.
%, which is considered to be locally lipschitz.

In this work, we consider two types of measurements, i.e., absolute measurements and relative measurements, which are widely considered in the literature \cite{2019distributed,2018Cooperative}.

\textbf{Absolute measurement:}
An absolute measurement refers to agent $i$ takes an observation on its state directly, with an measurement equation as the following form:
\begin{align}\label{eqn:absolute_measurement}
y_{i,k}=C_i x_{i,k} +v_{i,k},
\end{align}
where $y_{i,k}\in \mathbb{R}^{m_i}$ is the measurement at time instant $k$ for agent $i$;
$C_i \in \mathbb{R}^{n_i \times m_i}  $ is the measurement matrix, and the pair $(C_i,A_i)$ is observable;
 $v_{i,k} \in \[ \mathbf{v}_{i,k}  \] $ is the measurement noise,  with $d( \[\mathbf{v}_{i,k}\] ) \leq \bar{v}_i$, and $d(\cdot)$ returns the diameter of a set.

Note that the absolute measurement is accessible to all the agents, but with different precisions, i.e., parts of agents are equipped with  high accuracy absolute sensors.
These agents is considered as  `\textbf{anchors}'(or  `\textbf{beacons}'), which is common in cooperative localization studies~\cite{2009Distributed,Tian2007A}.
The other agents, in this article, are called \textbf{`non-anchors'} or \textbf{`original agents'}, which only possess low precision absolute measurements, and need to utilize relative measurements with their neighborhoods to improve estimation accuracy.

%bound $\| v\| \leq  V_i , \forall v \in \[ \mathbf{v}_{i,k}  \]$.
\textbf{Relative measurements:}
Relative  measurements are related to agent $i$ and its neighbors,
\begin{align}\label{eqn:relative_measurement}
y_{i,j,k}^r= \| x_{i,k}-x_{j,k}\|_2+r_{i,j,k},~j\in \mathcal{N}_i,
\end{align}
where $y_{i,j,k}^r\in \mathbb{R}^{p_i}$ is the measurement at time instant $k$ between agent $i$ and its in-neighborhood agent $j$;
$g_{i,j}: \mathbb{R}^{n_{i}} \times \mathbb{R}^{n_{j}} \to \mathbb{R}^{p_i} $ is the  measurement function, in this article two types of relative measurements are considered, i.e., relative range measurement and relative bearing measurement.
where $\|\cdot \|_2 $ is the 2-norm of a vector, which represents the range-based measurements (e.g., UWB measurements).
$r_{i,j,k} \in \[  \mathbf{r}_{i,j,k}  \]$ is the range measurement noise, with $ \underline{r}_{i,j,k} \leq  r_{i,j,k} \leq \bar{r}_{i,j,k} $.
%$d(\[ \mathbf{r}_{i,j,k}\]) \leq \bar{r}_{i,j}$.

In this article, the uncertain variables of  noise at each time instant and the initial state of each agent are assumed to be unrelated to guarantee the non-stochastic Hidden Markov Model \cite{2021Rethinking}, i.e.,
\begin{assumption}[Unrelated Noises and Initial State]
$\forall k \in \mathbb{N}_0$, $\mathbf{w}_{i,0:k}$,$\mathbf{v}_{i,0:k}$,$\mathbf{r}_{i,j,0:k}$,
$\mathbf{x}_{i,0}$,$~\forall i,j =1,\ldots,N$ are unrelated.
\end{assumption}

The relative range measurement is accessible to all the agents in the system, and the communication topology is consistent with the time-invariant measurement topology, i.e., $(v_j,v_i )\in \mathcal{E}$ implies agent $i$ takes a relative measurement from agent $j$ and receives information from agent $j$ simultaneously.

Consider a general topology of the multi-agent system as \figref{fig:general_topology}, where the red nodes represent agents with high precision absolute measurements.
\begin{figure}
  \centering
  \includegraphics[width=0.6\columnwidth]{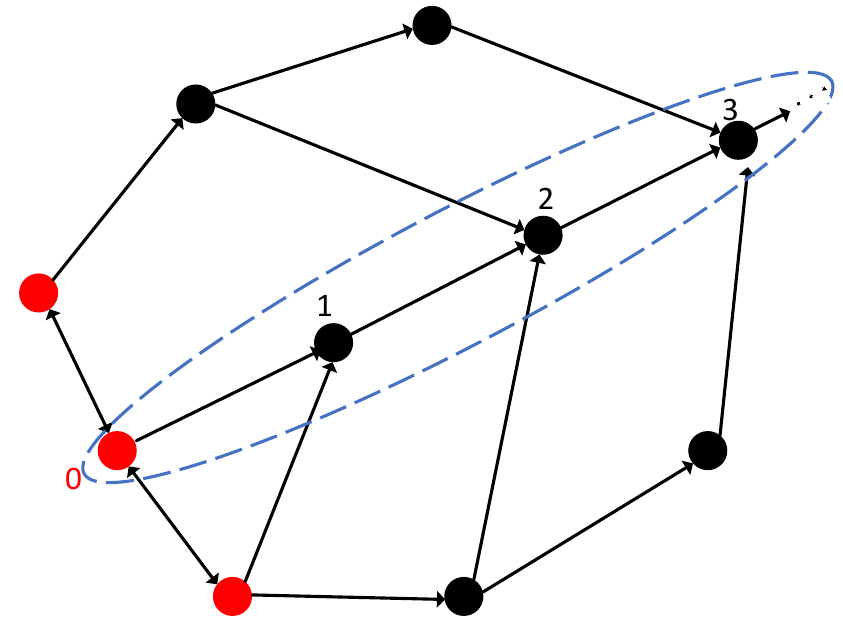}
  \caption{Diagram of chain topology. }\label{fig:general_topology}
\end{figure}

As in \cite{2013Understanding}, ordinary agents benefit from the absolute measurements of anchors through the chain, hence
without loss of generality, we consider  a subtopology of chain (circled by the blue dash line), and  other chains in the topology can be generalized in a similar way.
Without loss of generality, the anchor node is labeled as $0$, while other sensor nodes along the chain are labeled as $1,~2,~\cdots,l_{\mathrm{max}} $.

The aim of this paper is to derive an  uncertain range $S_{i,k}$  for each agent $i$ at each instant $k$ in a distributed manner, using all the measurements $y_{i,k}$ and $y_{i,i-1,k}^r$, $j \in \mathcal{N}_i$ up to $k$, such that
$ x_{i,k} \in S_{i,k}, \forall k \in \mathbb{N}_0$.
 %$
%

\section{Main Results}\label{sec:Main_Results}
In this section, we provide a distributed SMFing framework such that each agent  can derive its uncertain range sequentially along the chain topology (see \secref{sec:SMFing Framework Analysis}).
Then in \secref{sec:State Estimation With Range/Bearing Data Based on Constrained Zonotope}, a detailed algorithm based on the extend constrained zonotope is developed to realize the framework proposed in \secref{sec:SMFing Framework Analysis}.

\subsection{Sequentail SMFing Framework}\label{sec:SMFing Framework Analysis}

%Our analysis begins with a simple case, i.e., a topology of chain rooted by the anchor node (see \figref{fig:chain_topology}).
%

%
%

%For each agent $i$, the neighborhood agent in the chain topology is the agent $i-1$.
%
  According to the distributed SMFing framework proposed in \cite{2023DSMF}, the estimation process can be expressed as the following framework (see \algref{alg:Sequential_SMFing_Framework}).

\begin{algorithm}
\caption{Sequential  SMFing Framework}
\label{alg:Sequential_SMFing_Framework}
\begin{algorithmic}[1]
\STATE $\mathbf{Initialization}$. For each agent $i$, set the initial prior range $S_{i,0}$.
\LOOP
\STATE $\mathbf{Prediction}$. For $k \in \mathbb{N}_0$ ,given the posterior range $S_{i,k-1}$ in the previous step, the prior range $\bar{S}_{i,k} $ is predicted by
    \begin{align}
     \bar{S}_{i,k}=A_{i}(S_{i,k-1} ) \oplus B_i u_{i,k} \oplus  \[\mathbf{w}_{i,k-1} \].
     \end{align}
\STATE $\mathbf{Measurement}$. Take measurements $y_{0,k} $ and $y^r_{i,i-1,k}$.
\STATE $\mathbf{Update}$. The update is divided into $l_{\mathrm{max}}$ steps.
\STATE $i=0$, i.e., the update of anchor $0$
        \begin{align}\label{eqn:anchor_chain_update}
    S_{0,k}= \mathcal{X}_{0,k}(C_0, y_{0,k} , \[\mathbf{v}_{0,k}\])  \cap \bar{S}_{0,k},
    \end{align}
    where $\mathcal{X}_{0,k}(C_0,y_{0,k},\[ \mathbf{v}_{0,k} \])=\{ x_{0,k}: y_{0,k}=C_{0}x_{0,k}+v_{0,k},~v_{0,k} \in \[ \mathbf{v}_{0,k} \]  \}$.
\STATE  \textbf{While} $1 \leq i \leq l_{\mathrm{max}}$
      \begin{align}\label{eqn:uncertain_range_iteration}
       S_{i,k}=\bar{S}_{i,k} \cap g_{i,i-1}^{-1}(\{ y^r_{i,i-1,k} \}\oplus \[-\mathbf{r}_{i,i-1,k}  \] ,S_{i-1,k} ) 
              \cap \mathcal{X}_{i,k}(C_i,y_{i,k},\[ \mathbf{v}_{i,k} \]).
       \end{align}
\STATE   $i=i+1$;
\STATE End while.
\STATE  $k=k+1$;
\ENDLOOP
\end{algorithmic}
\end{algorithm}

It should be noticed that the posterior uncertain range of each agent $i$ is computed step by step from $0$ to $l_{\mathrm{max}}$, and hence the process shown in \algref{alg:Sequential_SMFing_Framework} is called ``sequential localization'' in cooperative localization problems.

It should be noticed that for relative range measurements  $g_{i,j}(x_{i,k},x_{j,k})=\| x_{i,k}-x_{j,k}   \|_2$, $g_{i,j}^{-1}$ has its specific geometrical significance.
\begin{proposition}
The measurement set derived by the relative range measurement satisfies:
\begin{align}
\{x_{i,k}: x_{i,k} \in g_{i,j}^{-1}(\{z_{i,j,k}\} \oplus \[ -\mathbf{r}_{i,j,k} \]  , S_{j,k} )\} =S_{j,k} \oplus Y^r_{i,j,k},
\end{align}
where $Y^r_{i,j,k}$ is a geometrical ring defined by $Y^r_{i,j,k}=\{s: \underline{R}_{i,j,k} \leq   \|  s\|_2 \leq \bar{R}_{i,j,k} \} $, $\underline{R}_{i,j,k}=y^r_{i,j,k}-\bar{r}_{i,j,k}$ and $ \underline{R}_{i,j,k}=y^r_{i,j,k}-\underline{r}_{i,j,k}$.
\end{proposition}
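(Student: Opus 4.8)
The plan is to establish the claimed identity as an equality of subsets of $\mathbb{R}^{n_i}$ by rewriting the preimage through a single change of variables, so that both set inclusions follow at once from a chain of equivalences rather than being argued separately. Recall that, by the definition of the two-argument preimage used in \cite{2023DSMF}, the set $g_{i,j}^{-1}(\{z_{i,j,k}\}\oplus\[-\mathbf{r}_{i,j,k}\],S_{j,k})$ consists of exactly those $x_{i,k}$ for which there is a companion $x_{j,k}\in S_{j,k}$ whose range value $g_{i,j}(x_{i,k},x_{j,k})=\|x_{i,k}-x_{j,k}\|_2$ lands in the prescribed measurement interval. Thus the whole statement reduces to unpacking this definition and matching it against the definition of the Minkowski sum.

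First I would simplify the scalar measurement interval. Since $p_i=1$ for range measurements, $\{z_{i,j,k}\}\oplus\[-\mathbf{r}_{i,j,k}\]$ is a Minkowski sum in $\mathbb{R}$ of the singleton $\{z_{i,j,k}\}=\{y^r_{i,j,k}\}$ with the reflected noise range $\[-\mathbf{r}_{i,j,k}\]=[-\bar{r}_{i,j,k},-\underline{r}_{i,j,k}]$, hence
\begin{align}
\{z_{i,j,k}\}\oplus\[-\mathbf{r}_{i,j,k}\]
&=\bigl[\,y^r_{i,j,k}-\bar{r}_{i,j,k},\ y^r_{i,j,k}-\underline{r}_{i,j,k}\,\bigr]\notag\\
&=[\underline{R}_{i,j,k},\bar{R}_{i,j,k}].
\end{align}
This identifies the two endpoints with $\underline{R}_{i,j,k}$ and $\bar{R}_{i,j,k}$, correcting the evident typo in the statement so that the upper bound reads $\bar{R}_{i,j,k}=y^r_{i,j,k}-\underline{r}_{i,j,k}$. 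Consequently, $x_{i,k}$ lies in the preimage if and only if there exists $x_{j,k}\in S_{j,k}$ with $\underline{R}_{i,j,k}\leq\|x_{i,k}-x_{j,k}\|_2\leq\bar{R}_{i,j,k}$.

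Next I would perform the substitution $s:=x_{i,k}-x_{j,k}$. The scalar constraint $\underline{R}_{i,j,k}\leq\|s\|_2\leq\bar{R}_{i,j,k}$ is precisely the membership $s\in Y^r_{i,j,k}$ by the definition of the ring. Hence $x_{i,k}$ belongs to the preimage if and only if there exist $x_{j,k}\in S_{j,k}$ and $s\in Y^r_{i,j,k}$ with $x_{i,k}=x_{j,k}+s$, which is exactly the defining condition for $x_{i,k}\in S_{j,k}\oplus Y^r_{i,j,k}$. Reading this equivalence in both directions yields the two inclusions, and therefore the claimed equality.

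The argument is essentially definitional, so there is no genuine analytic obstacle; the only points demanding care are the bookkeeping of the interval endpoints under the reflection $-\mathbf{r}_{i,j,k}$ (which swaps the roles of $\underline{r}_{i,j,k}$ and $\bar{r}_{i,j,k}$) and the verification that the scalar range constraint translates \emph{exactly} into membership in the annular shell $Y^r_{i,j,k}$. I would also note that the identity holds for an arbitrary set $S_{j,k}$, with no reliance on a zonotopic representation, so it can be invoked as a purely geometric fact in the constrained-zonotopic construction of \secref{sec:State Estimation With Range/Bearing Data Based on Constrained Zonotope}.
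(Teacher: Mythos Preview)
Your proposal is correct and follows essentially the same approach as the paper: introduce $s=x_{i,k}-x_{j,k}$, recognize the scalar constraint $\|s\|_2\in[\underline{R}_{i,j,k},\bar{R}_{i,j,k}]$ as membership in the ring $Y^r_{i,j,k}$, and then read off the Minkowski-sum description. Your write-up is in fact a bit more careful than the paper's, since you explicitly compute the endpoint interval $\{z_{i,j,k}\}\oplus\[-\mathbf{r}_{i,j,k}\]=[\underline{R}_{i,j,k},\bar{R}_{i,j,k}]$ and frame the argument as a chain of equivalences, thereby establishing both inclusions at once, whereas the paper's proof only spells out the forward direction.
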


\begin{proof}
Let $s=x_{i,k}-x_{j,k} $, then we have $\| s \|_2+r_{i,j,k}=z_{i,j,k},~r_{i,j,k} \in \[ -\mathbf{r}_{i,j,k} \] $, which implies  $\| s \|_2 \in [\underline{R}_{i,j,k},~\bar{R}_{i,j,k} ]$ and $ Y^r_{i,j,k}=\bigcup \{ s \}.$
%\begin{align}
%& \| s \|_2 \in [\underline{R}_{i,j,k},~\bar{R}_{i,j,k} ], \\ \notag
%& Y^r_{i,j,k}=\bigcup \{ s \}.
%\end{align}

Moreover, there holds
\begin{align}
x_{i,k}=x_{j,k}+s,~x_{j,k} \in S_{j,k},~s\in Y^r_{i,j,k}.
\end{align}
From the definition of Minkowski sum, we have
\begin{align}
x_{i,k} \in S_{j,k} \oplus Y^r_{i,j,k},
\end{align}
which completes the proof.
\end{proof}

\subsection{Extended Constrained Zonotopic Algorithm Design}\label{sec:State Estimation With Range/Bearing Data Based on Constrained Zonotope}
In this section, the state estimation strategy is presented to over-approximate the measurement set.

\begin{definition}  (Extended Constrained Zonotope~\cite{2022YiruiCong})
\label{def:Extended_Constrained_Zonotope}
A set $\mathcal{Z} \subseteq \mathbb{R}^n$ is an extended constrained zonotope if there exists a quintuple $(G,c,A,b,h) \in \mathbb{R}^{ n \times n_g} \times \mathbb{R}^n \times \mathbb{R}^{ n_c \times n_g} \times \mathbb{R}^{ n_c} \times [0,\infty ]^{n_g} $ such that $\mathcal{Z}$ is expressed by
\begin{align}\label{eqn:Extended_Constrained_Zonotope}
\{ G \xi +c : A \xi =b, \xi \in \underset{j=1}{\overset{n_g}{\prod}} [-h^{(j)}, h^{(j)} ] \} =:Z(G,c,A,b,h),
\end{align}
where $ h^{(j)}$ is the $j$\textsuperscript{th} component of $h$.
\end{definition}
When $h^{(j)}=1,~j=1,\cdots,n_g$, \defref{def:Extended_Constrained_Zonotope} becomes the classical constrained zonotope in \cite{Scott2016}.

\begin{lemma}\label{lem:CZ_properties}
For every $Z=(G_z, c_z,A_z,b_z,h_z)\subset \mathbb{R}^n $, $W=(G_w,c_w,A_w,b_w,h_w) \subset \mathbb{R}^n$,
$Y=(G_y,c_y,A_y,b_y,h_y) \subset \mathbb{R}^k$ and $T \in  \mathbb{R}^{k \times n} $, the three identities hold:
\begin{align}\label{eqn:CZ_linear_transform}
TZ&=(TG_z,Tc_z,A_z,b_z,h_z ), \\
Z \oplus W&=([G_z~G_w],c_z+c_w,\begin{bmatrix}A_z &0 \\ 0 & A_w \end{bmatrix}, \begin{bmatrix}b_z \\ b_w \end{bmatrix} ,\begin{bmatrix}h_z \\ h_w \end{bmatrix}),\\
Z \cap_{\mathbf{T}}Y&=([G_z~0],c_z,\begin{bmatrix}A_z & 0 \\ 0 & A_y \\ TG_z & -G_y \end{bmatrix}, \begin{bmatrix}b_z \\ b_y \\ c_y-Tc_z \end{bmatrix} ,\begin{bmatrix}h_z \\ h_y \end{bmatrix} ).
\end{align}
\end{lemma}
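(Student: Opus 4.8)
The plan is to verify each of the three identities directly from the defining representation \eqref{eqn:Extended_Constrained_Zonotope} by showing that the point sets on the two sides coincide. The key observation is that the passage from the classical constrained zonotope to the extended one affects only the parameter domain, replacing the cube $[-1,1]^{n_g}$ by the box $\prod_{j=1}^{n_g}[-h^{(j)},h^{(j)}]$ (with possibly infinite side lengths). Since all three constructions act on the generator variables by stacking, and since a product of boxes is again a box, the linear-algebraic bookkeeping of the constrained-zonotope identities of \cite{Scott2016} carries over essentially verbatim; only the bound vector must be tracked alongside the usual data $(G,c,A,b)$. I would therefore treat each identity as an ``if and only if'' between membership in the left-hand set and the existence of a suitable stacked parameter vector lying in the right-hand box.

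For the linear map, take $x\in TZ$, so $x=Tz$ with $z=G_z\xi+c_z$, $A_z\xi=b_z$, and $\xi\in\prod_j[-h_z^{(j)},h_z^{(j)}]$. Then $x=(TG_z)\xi+Tc_z$ with the same $\xi$ and the same constraints, giving membership in $(TG_z,Tc_z,A_z,b_z,h_z)$, and the converse reads off identically. For the Minkowski sum, a point $x=z+w$ with $z=G_z\xi_z+c_z$ and $w=G_w\xi_w+c_w$ is rewritten using the stacked variable $\xi=[\xi_z^\top\ \xi_w^\top]^\top$, for which $[G_z\ G_w]\xi+(c_z+c_w)=x$; the block-diagonal constraint matrix enforces $A_z\xi_z=b_z$ and $A_w\xi_w=b_w$ independently, and $\xi$ ranges over the product box whose bound vector is exactly $[h_z^\top\ h_w^\top]^\top$. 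Both inclusions are then immediate, and neither argument uses anything beyond the box structure of the parameter domain.

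The main case is the generalized intersection. Here I would first fix the definition $Z\cap_{\mathbf{T}}Y:=\{\,z\in Z: Tz\in Y\,\}$, i.e. the set of points of $Z$ whose image under $T$ lies in $Y$. For the forward inclusion, if $x\in Z\cap_{\mathbf{T}}Y$ then $x=G_z\xi_z+c_z$ with $A_z\xi_z=b_z$, and the membership $Tx\in Y$ supplies a witness $\xi_y$ with $Tx=G_y\xi_y+c_y$ and $A_y\xi_y=b_y$. Substituting $Tx=TG_z\xi_z+Tc_z$ and rearranging yields the coupling constraint $TG_z\xi_z-G_y\xi_y=c_y-Tc_z$, which is precisely the third block row of the constraint matrix; writing $x=[G_z\ 0]\xi+c_z$ with $\xi=[\xi_z^\top\ \xi_y^\top]^\top$ and noting that $\xi$ lies in the box with bound vector $[h_z^\top\ h_y^\top]^\top$ places $x$ in the right-hand set. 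The subtle point, and the step I expect to require the most care, is the reverse inclusion: given any $\xi=[\xi_z^\top\ \xi_y^\top]^\top$ in that box satisfying all three block rows, I must check that the resulting $x=[G_z\ 0]\xi+c_z=G_z\xi_z+c_z$ genuinely lies in $Z$ (immediate from the first row together with the $\xi_z$-box) and, crucially, that $Tx\in Y$. The latter follows because the coupling row gives $Tx=TG_z\xi_z+Tc_z=G_y\xi_y+c_y$, while the second row and the $\xi_y$-box certify $\xi_y$ as a valid parameter for $Y$; hence $Tx\in Y$ and $x\in Z\cap_{\mathbf{T}}Y$. Because the $x$-component depends only on $\xi_z$ through the block $[G_z\ 0]$, the auxiliary variable $\xi_y$ functions purely as a certificate for the constraint $Tx\in Y$, which is exactly the behaviour the identity requires.
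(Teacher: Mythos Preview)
Your proposal is correct: each of the three identities is verified by a clean bidirectional inclusion argument at the level of the defining representation, and your handling of the generalized intersection---introducing the auxiliary variable $\xi_y$ as a certificate for $Tx\in Y$ and checking that the coupling row recovers exactly $TG_z\xi_z+Tc_z=G_y\xi_y+c_y$---is the right way to close the loop.

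It is worth noting, however, that the paper does not actually supply a proof of this lemma. It is stated as a recorded fact, in the spirit of the classical constrained-zonotope identities from \cite{Scott2016}, with the only novelty being that the bound vector $h$ is carried along as an extra datum. Your write-up therefore goes beyond what the paper does: you give a self-contained verification, whereas the paper simply quotes the result. The trade-off is that the paper keeps the exposition light by deferring to the literature, while your argument makes explicit why nothing breaks when the unit cube $[-1,1]^{n_g}$ is replaced by the general box $\prod_j[-h^{(j)},h^{(j)}]$---a point that is genuinely worth spelling out once, since the extended setting allows $h^{(j)}=\infty$ and one should observe that the stacking of boxes (finite or not) is still a box of the required form.
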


The prior and  posterior  uncertain range of agent $i$ is denoted by
$ \bar{\mathcal{Z}}_{i,k}=(\bar{G}_{i,k},\bar{c}_{i,k},\bar{A}_{i,k},\bar{b}_{i,k},\bar{h}_{i,k}) $
and  $\mathcal{Z}_{i,k}= (\hat{G}_{i,k},\hat{c}_{i,k},\hat{A}_{i,k},\hat{b}_{i,k},\hat{h}_{i,k}) $, respectively.

\begin{proposition}
Let $R=\underset{q \in \mathcal{I}}{\bigcup}R^q $, the following equation holds
\begin{align}
(S_1 \oplus R) \cap S_2 = \underset{s\in S_1}{\bigcup}[\underset{q \in \mathcal{I}}{\bigcup}(\{s \} \oplus R^q  )\cap S_2  ].
\end{align}
\end{proposition}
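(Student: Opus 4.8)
The plan is to prove the identity by reducing it to three elementary set-theoretic facts: the Minkowski sum distributes over arbitrary unions, every set is the union of its singletons, and intersection distributes over arbitrary unions. First I would rewrite the summand on the left. Since $R=\bigcup_{q\in\mathcal{I}}R^q$ and the Minkowski sum distributes over unions, we have $S_1\oplus R=\bigcup_{q\in\mathcal{I}}(S_1\oplus R^q)$. Writing $S_1=\bigcup_{s\in S_1}\{s\}$ and distributing the sum once more over this union gives $S_1\oplus R=\bigcup_{s\in S_1}\bigcup_{q\in\mathcal{I}}(\{s\}\oplus R^q)$.

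Next I would intersect with $S_2$ and invoke distributivity of intersection over unions to pull $S_2$ inside both unions, obtaining $(S_1\oplus R)\cap S_2=\bigcup_{s\in S_1}\bigcup_{q\in\mathcal{I}}\bigl[(\{s\}\oplus R^q)\cap S_2\bigr]$, which is precisely the claimed right-hand side. Each of the three manipulations is a standard identity, so assembling them in this order yields the result directly.

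Alternatively, the same conclusion follows by a double-inclusion element chase, which I would present if a more self-contained argument is preferred. For $\subseteq$, any $x$ in the left-hand set lies in $S_2$ and can be written $x=s+r$ with $s\in S_1$ and $r\in R$; choosing $q\in\mathcal{I}$ with $r\in R^q$ places $x$ in $(\{s\}\oplus R^q)\cap S_2$, hence in the right-hand set. For $\supseteq$, any $x$ in the right-hand set satisfies $x\in S_2$ and $x=s+r$ with $s\in S_1$, $r\in R^q\subseteq R$, so $x\in(S_1\oplus R)\cap S_2$.

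There is no genuine obstacle here: the entire content is the commutation of Minkowski sum and intersection with arbitrary unions. The only point worth flagging is that the outer union is indexed by $s\in S_1$, which for a zonotopic $S_1$ is an uncountable set; since the distributivity laws hold for arbitrary index sets this causes no difficulty, but I would state it explicitly so the reader does not expect any finiteness assumption on $S_1$ or $\mathcal{I}$.
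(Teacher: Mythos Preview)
Your proposal is correct and follows essentially the same approach as the paper, which invokes only the distributivity of the Minkowski sum over unions, i.e., $S_1\oplus\bigcup_{q\in\mathcal{I}}R^q=\bigcup_{q\in\mathcal{I}}(S_1\oplus R^q)$. Your version is simply more explicit: you also spell out the decomposition $S_1=\bigcup_{s\in S_1}\{s\}$ and the distributivity of intersection over unions, whereas the paper leaves those steps implicit.
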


\begin{proof}
This conclusion comes directly from the commutativity of set union and Minkowski sum, i.e.,
\begin{align}
(S_1 \oplus \underset{q \in \mathcal{I}}{\bigcup}R^q)= \underset{q \in \mathcal{I}}{\bigcup}(S_1 \oplus R^q).
\end{align}
\end{proof}

Hence to calculate $\hat{\mathcal{Z}}_{i-1,k} \oplus Y^r_{i,i-1,k}$, we first consider a fixed point $x_{i,i-1,k}^0 \in \hat{\mathcal{Z}}_{i-1,k}$.
Assume there is a relative bearing  measurement between $ x_{i-1,k}^0$ and agent $i$
\begin{align}
y^{b}_{i,i-1,k}=\mathrm{ang}(x_{i,k}-x_{i-1,k}^0)+r^b_{i,i-1,k},
\end{align}
where $r^b_{i,i-1,k} \in [\underline{r}^b_{i,i-1,k},\bar{r}^b_{i,i-1,k} ]$ is the relative bearing measurement noise.

\begin{lemma}\label{lem:relative_position_CZ}
The measurement set given $Y^{br}_{i,i-1,k} $ by the relative range measurement $y^r_{i,i-1,k}$ and relative bearing measurement $ y^{b}_{i,i-1,k}$ is outer bounded by
\begin{multline}\label{eqn:UWB_position_CZ}
Y_{i,i-1,k}^{br}\subseteq   \{z:
z=[G_{y^{br}}~0]\begin{bmatrix}\xi_1 \\ \xi_2 \end{bmatrix}+c_{y^{br}}, \\
 \begin{bmatrix} H^{br}G_{y^b} & (\sigma_1-\sigma_2) \end{bmatrix} \begin{bmatrix}\xi_1 \\ \xi_2 \end{bmatrix}
 =\sigma_1-Hc_{y^b},
 \begin{bmatrix}\xi_1 \\ \xi_2 \end{bmatrix}\in h_{y^b}\times [0,1]\},
\end{multline}
where
\begin{align}\label{eqn:segment_part_cZrepresentation}
&G_{y^{br}}= \begin{bmatrix}\mathrm{cos}(\bar{\theta}_{i,i-1,k}) &\mathrm{cos}(\underline{\theta}_{i,i-1,k} ) \\
%(z^b_{i,i-1,k}+\bar{r}^b_{i,i-1,k} )&\mathrm{cos}(z^b_{i,i-1,k}+\underline{r}^b_{i,i-1,k} ) \\
\mathrm{sin}(\bar{\theta}_{i,i-1,k} )& \mathrm{sin}(\underline{\theta}_{i,i-1,k} )\end{bmatrix}, 
%\mathrm{sin}(z^b_{i,i-1,k}+\bar{r}^b_{i,i-1,k} )& \mathrm{sin}(z^b_{i,i-1,k}+\underline{r}^b_{i,i-1,k} )\end{bmatrix}, \\
~c_{y^{br}}=x_{i-1,k}^0, \\ \notag
%[\breve{p}_{i-1,x,k}^{T}, \breve{p}_{i-1,y,k}^{T}]^T, \\ \notag
&H^{br}=\begin{bmatrix} \mathrm{cos}(\frac{\bar{\theta}_{i,i-1,k}+\underline{\theta}_{i,i-1,k}}{2} ) & \mathrm{sin}(\frac{\bar{\theta}_{i,i-1,k}+\underline{\theta}_{i,i-1,k}}{2}) \end{bmatrix}, \\ \notag
&\sigma_1^{br}=\underline{R}_{i,i-1,k} \mathrm{cos}(\frac{\bar{\theta}^b_{i,i-1,k}+\underline{\theta}^b_{i,i-1,k}}{2} )+H^{br}x_{i-1,k}^0, 
~\sigma_2^{br}=\bar{R}_{i,i-1,k}+H^{br}x_{i-1,k}^0,  \\ \notag
&h_{y^{br}}=[0, \frac{\bar{R}_{i,i-1,k}} {\mathrm{cos}(\frac{\bar{r}^b_{i,i-1,k}-\underline{r}^b_{i,i-1,k}}{2} )} ]\times [0, \frac{\bar{R}_{i,i-1,k}} {\mathrm{cos}(\frac{\bar{r}^b_{i,i-1,k}-\underline{r}^b_{i,i-1,k}}{2} )} ], \\ \notag
&\underline{\theta}_{i,i-1,k}=y_{i,i-1,k}^b-\bar{r}_{i,i-1,k}^b,~\bar{\theta}_{i,i-1,k}=y_{i,i-1,k}^b-\underline{r}_{i,i-1,k}^b.
\end{align}
\end{lemma}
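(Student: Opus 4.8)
The plan is to recognise $Y^{br}_{i,i-1,k}$ explicitly as an annular sector and then verify \eqref{eqn:UWB_position_CZ} as a pure containment statement, by producing for every point of the sector a feasible coordinate pair $(\xi_1,\xi_2)$ of the right-hand set; since an outer bound only asks for $Y^{br}_{i,i-1,k}\subseteq(\text{RHS})$, this suffices. First I would fuse the two data. Writing $s:=x_{i,k}-x_{i-1,k}^0$ and $c:=x_{i-1,k}^0$, the range datum forces $\|s\|_2\in[\underline{R},\bar{R}]$ (this is exactly the ring $Y^r_{i,i-1,k}$ of the range-measurement proposition), while the bearing datum forces $\mathrm{ang}(s)=y^b_{i,i-1,k}-r^b_{i,i-1,k}\in[\underline{\theta},\bar{\theta}]$ with $\underline{\theta},\bar{\theta}$ as in \eqref{eqn:segment_part_cZrepresentation} (subscripts suppressed). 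Hence
\begin{align}
Y^{br}_{i,i-1,k}=\{\,c+\rho(\cos\theta,\sin\theta):\rho\in[\underline{R},\bar{R}],\ \theta\in[\underline{\theta},\bar{\theta}]\,\},
\end{align}
an annular sector centred at $c$. I would then read off the geometry encoded in \eqref{eqn:segment_part_cZrepresentation}: the two columns of $G_{y^{br}}$ are the boundary unit directions $g_1=(\cos\bar{\theta},\sin\bar{\theta})$ and $g_2=(\cos\underline{\theta},\sin\underline{\theta})$; the row $H^{br}$ is the unit bisector $(\cos\phi,\sin\phi)$ with $\phi=(\bar{\theta}+\underline{\theta})/2$; $\delta:=(\bar{\theta}-\underline{\theta})/2=(\bar{r}^b-\underline{r}^b)/2$ is the half-angular width; and $\sigma_1^{br},\sigma_2^{br}$ are the smallest and largest projections of the sector onto the bisector, namely $\sigma_1^{br}=\underline{R}\cos\delta+H^{br}c$ and $\sigma_2^{br}=\bar{R}+H^{br}c$.

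Next I would prove the inclusion pointwise. Take an arbitrary $p=\rho(\cos\theta,\sin\theta)$ in the sector and decompose it in the (non-orthogonal) generator basis, $p=ag_1+bg_2$; a short cross-product computation gives $a=\rho\,\sin(\theta-\underline{\theta})/\sin 2\delta$ and $b=\rho\,\sin(\bar{\theta}-\theta)/\sin 2\delta$, both nonnegative because $\theta\in[\underline{\theta},\bar{\theta}]$. In the standing small-uncertainty regime $2\delta\le\pi/2$ these obey $a,b\le\rho\le\bar{R}\le\bar{R}/\cos\delta$, so $\xi_1:=(a,b)^{\top}\in h_{y^{br}}$. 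For the equality constraint I would use the bisector projection $H^{br}p=\rho\cos(\theta-\phi)$; since $|\theta-\phi|\le\delta$ this lies in $[\underline{R}\cos\delta,\bar{R}]=[\sigma_1^{br}-H^{br}c,\ \sigma_2^{br}-H^{br}c]$. Consequently
\begin{align}
\xi_2:=\frac{H^{br}p-\underline{R}\cos\delta}{\bar{R}-\underline{R}\cos\delta}\in[0,1]
\end{align}
solves $H^{br}G_{y^{br}}\xi_1+(\sigma_1^{br}-\sigma_2^{br})\xi_2=\sigma_1^{br}-H^{br}c$, because $H^{br}G_{y^{br}}\xi_1=H^{br}p$. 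Since $[G_{y^{br}}~0]\begin{bmatrix}\xi_1\\\xi_2\end{bmatrix}+c=p+c$ recovers the chosen sector point, the inclusion \eqref{eqn:UWB_position_CZ} follows.

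I expect the main obstacle to be the coefficient-bound step rather than the projection step. Nonnegativity of $a,b$ is immediate, but the upper bound $\bar{R}/\cos\delta$ on the box $h_{y^{br}}$ relies on $\sin(\cdot)$ being monotone on $[0,2\delta]$, which holds only when $2\delta\le\pi/2$; for wider bearing uncertainty the decomposition coefficients can exceed $\rho$ and a larger box would be required. I would therefore flag this small-angle hypothesis explicitly and note that the stated radius is deliberately conservative, so that the parallelogram generated by $g_1,g_2$ is large enough to cover the whole outer arc, including its corner points $\bar{R}g_1$ and $\bar{R}g_2$. The bisector slab, by contrast, is pinned down by the single estimate $\cos(\theta-\phi)\ge\cos\delta$, after which $\xi_2$ is determined automatically; thus the entire argument reduces to that one decomposition bound.
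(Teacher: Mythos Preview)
Your proof is correct, but it takes a different route from the paper's. The paper argues geometrically: the annular sector is first enclosed in a trapezoid $\mathrm{ABCD}$, which is realised as the intersection of a triangle $\triangle\mathrm{COD}$ (with $O=x_{i-1,k}^0$ and $C,D$ at distance $\bar R/\cos\delta$ along the two boundary rays) with the bisector slab $\{z:H^{br}z\in[\sigma_1^{br},\sigma_2^{br}]\}$. The triangle is then relaxed to the parallelogram $\{a g_1+b g_2:a,b\in[0,\bar R/\cos\delta]\}$, and the constrained-zonotope description \eqref{eqn:UWB_position_CZ} drops out by one application of the generalised-intersection identity in \lemref{lem:CZ_properties}. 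Your argument bypasses this machinery entirely: you decompose an arbitrary sector point in the $(g_1,g_2)$ basis, bound the coefficients, and manufacture $\xi_2$ from the bisector projection. The paper's approach is shorter and explains \emph{why} the particular radius $\bar R/\cos\delta$ appears (it is exactly $\|OC\|$), while yours makes the inclusion completely explicit and needs no appeal to \lemref{lem:CZ_properties}.

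One remark on your flagged small-angle hypothesis: your intermediate estimate $a,b\le\rho$ indeed requires $2\delta\le\pi/2$, but the final bound $a,b\le\bar R/\cos\delta$ actually holds for every $\delta\in(0,\pi/2)$. For $\pi/4<\delta<\pi/2$ one has $\sin(\theta-\underline\theta)\le 1\le 2\sin\delta$, so $\rho\sin(\theta-\underline\theta)\le\bar R\cdot 2\sin\delta$, which after dividing by $\sin 2\delta=2\sin\delta\cos\delta$ still gives $a\le\bar R/\cos\delta$. Thus your restriction can be dropped, matching the paper's implicit range.
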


\begin{proof}
According to the geometrical significance,  $Y_{i,i-1,k}^{br}=Y_{i,i-1,k}^{b} \cap Y^r_{i,i-1,k} $ is outer bounded by the trapezoid $\mathrm{ABCD}$, which is the intersection of triangle $\{z:z\in  \triangle \mathrm{COD} \}$ and  hyperplane
$\{ z: Hz=\begin{bmatrix}\mathrm{cos}(\bar{z}^b_{i,i-1,k}) & \mathrm{sin}(\bar{z}^b_{i,i-1,k}) \end{bmatrix}
%\begin{bmatrix}
z \in [\sigma_1,\sigma_2 ] \}$.

For the triangle $\triangle \mathrm{COD}$, each point can be covered by the linear combination of $\frac{\vec{\mathrm{OC}}}{\|  \mathrm{\vec{\mathrm{OD}}} \|}$ and $\frac{\vec{\mathrm{OD}}}{\|  \mathrm{OD} \|}$, hence $\triangle \mathrm{COD}$ can be expressed by the following zonotope:
\begin{multline}
\mathcal{Z}_{i,i-1,k}^b =\{ z= G_{y^{br}}\xi_1
% \begin{bmatrix} \zeta_1 \\ \zeta_2 \end{bmatrix} +
+x_{i-1,k}^0, \\
\xi_1 \in 
%\begin{bmatrix}  \breve{p}_{i-1,x,k} \\ \breve{p}_{i-1,y,k} \end{bmatrix},  
  [0, \frac{\bar{R}_{i,i-1,k}} {\mathrm{cos}(\frac{\bar{r}^b_{i,i-1,k}-\underline{r}^b_{i,i-1,k}}{2} )} ]   \times
[0, \frac{\bar{R}_{i,i-1,k}} {\mathrm{cos}(\frac{\bar{r}^b_{i,i-1,k}-\underline{r}^b_{i,i-1,k}}{2} )} ]
\}.
\end{multline}

Notice that point $A=x_{i-1,k}^0
%\begin{bmatrix}  \breve{p}_{i-1,x,k}+  \\ \breve{p}_{i-1,y,k} \end{bmatrix}
+ \underline{R}_{i,i-1,k} \begin{bmatrix} \mathrm{cos}(\bar{\theta}_{i,i-1,k} ) \\ \mathrm{sin}(\bar{\theta}_{i,i-1,k} )\end{bmatrix} $ and point
$C=x_{i-1,k}^0
%\begin{bmatrix}  \breve{p}_{i-1,x,k}+  \\ \breve{p}_{i-1,y,k} \end{bmatrix}
+ \bar{R}_{i,i-1,k} \begin{bmatrix} \mathrm{cos}(\underline{\theta}_{i,i-1,k} ) \\ \mathrm{sin}(\underline{\theta}_{i,i-1,k} )\end{bmatrix}$, $\sigma_1^{br}$ and $\sigma_2^{br}$ can be obtained by $A \in \{ z: Hz=\sigma_1^{br} \}$ and $C \in \{ z: Hz=\sigma_2^{br} \}$.  Write $[\sigma_1^{br}, \sigma_2^{br}] $ as $\mathcal{Z}_{\sigma}=\{\sigma:\sigma= (\sigma_2^{br}-\sigma_1^{br}) \xi_2 +\sigma_1^{br} ,~ \xi_2 \in [0,1]  \} $, \eqref{eqn:UWB_position_CZ} can be obtained by $Y_{i,i-1,k}^{br} \subseteq  \mathcal{Z}_{i,i-1,k}^b \cap_H^{br} \mathcal{Z}_{\sigma} $ with \lemref{lem:CZ_properties}.
\end{proof}

\begin{figure}[htbp]
  \centering
  \includegraphics[width=0.7\textwidth]{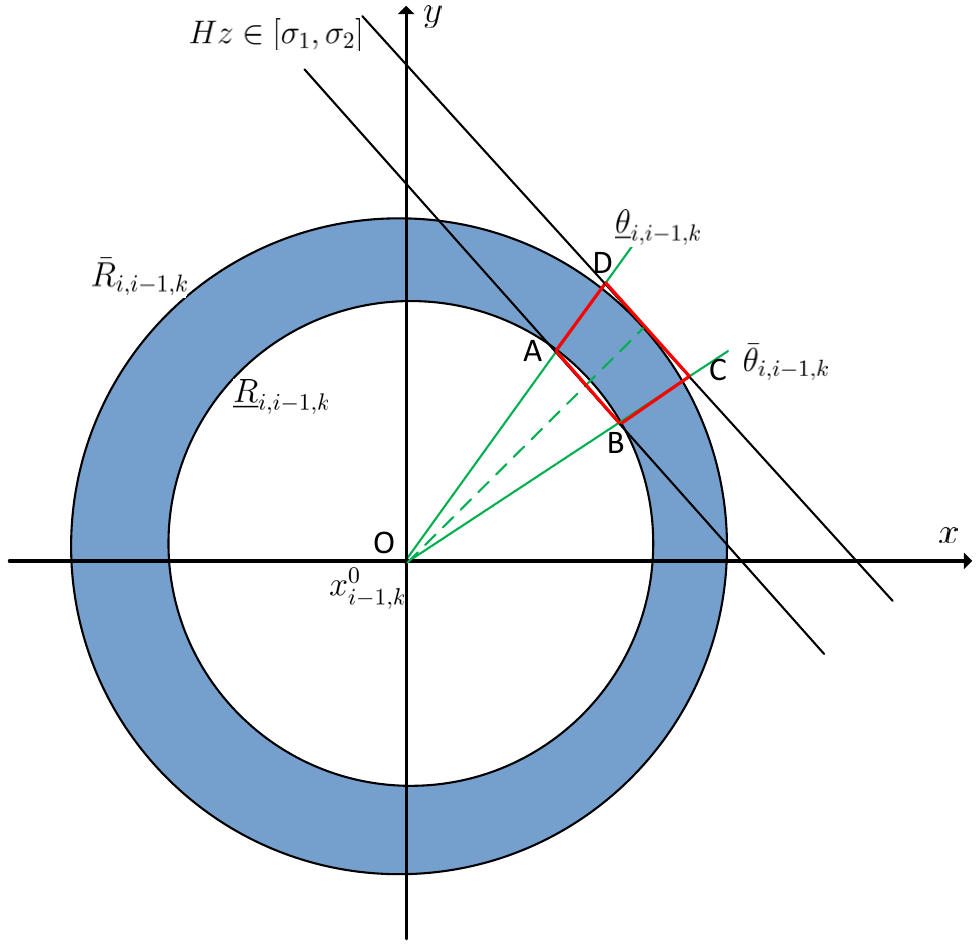}
  \caption{The diagram of proof of \lemref{lem:relative_position_CZ}.  }\label{fig:UWB_Position_Solution1}
\end{figure}

For only relative range measurement $y^r_{i,i-1,k}$, the set $Y^r_{i,i-1,k}$ can be approximated by segmenting  the full circle into $M$ parts, each part has the minimum angle value of $\frac{2 \pi q}{M}$ and maximum angle value $\frac{2 \pi (q+1)}{M},~q=0,1,\ldots,M-1$.
Repeat the computations above replacing the interval $[\underline{\theta}_{i,i-1,k}, \bar{\theta}_{i,i-1,k}]$
by $[\frac{2 \pi q}{M}, \frac{2 \pi (q+1)}{M}]$.
Each set $\mathcal{Z}_{i,i-1,k}^{0,r},\ldots,\mathcal{Z}_{i,i-1,k}^{M-1,r}$ needs to be intersected independently with the prediction prior uncertain range $\bar{S}_{i,k}$, and discarded if $\mathcal{Z}_{i,i-1,k}^{q,r} \cap \bar{S}_{i,k}=\emptyset$.
The range measurement set is a uncertain range correlated with the position of $x_{i-1,k}^0$, i.e.,
\begin{align}\label{eqn:range_measurement_center}
\mathcal{Z}^{r}_{i,i-1,k}(x_{i-1,k}^0)=\underset{\mathcal{Z}_{i,i-1,k}^{q,r} \cap \bar{S}_{i,k} \neq \emptyset}{\bigcup}\mathcal{Z}_{i,i-1,k}^{q,r}.
\end{align}

\begin{assumption}\label{asp:S_ik_relative_small}
The diameter of $\bar{\mathcal{Z}}_{i,k}$ satisfies
\begin{align}
d(\bar{\mathcal{Z}}_{i,k}) < \underline{R}_{i,i-1,k} < d(x_{i-1,k}^0,\bar{\mathcal{Z}}_{i,k} ).
\end{align}
\end{assumption}

\begin{proposition}\label{prop:leq_frac_pi3}
Under \aspref{asp:S_ik_relative_small}, if there exists $q_1, q_2 \in [0,M-1],~(q_1 < q_2) $ such that $ \mathcal{Z}_{i,i-1,k}^{q_1,r} \cap \bar{\mathcal{Z}}_{i,k} \neq \emptyset$ and $ \mathcal{Z}_{i,i-1,k}^{q_2,r} \cap \bar{\mathcal{Z}}_{i,k} \neq \emptyset$,
\begin{align}
 \mathcal{Z}_{i,i-1,k}^{q,r} \cap \bar{\mathcal{Z}}_{i,k} \neq \emptyset,~\forall   q \in [q_1,q_2].
\end{align}
Especially, let $q_1=\underset{q}{\mathrm{inf}} ~\mathcal{Z}_{i,i-1,k}^{q,r} \cap \bar{\mathcal{Z}}_{i,k} \neq \emptyset$ and
$q_2=\underset{q}{\mathrm{sup}} ~\mathcal{Z}_{i,i-1,k}^{q,r} \cap \bar{\mathcal{Z}}_{i,k} \neq \emptyset$, there holds
\begin{align}
\frac{2 \pi (q_2-q_1)}{M} < \frac{\pi}{3}.
\end{align}
\end{proposition}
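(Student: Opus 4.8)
The plan is to place the origin at the neighbour's position $x_{i-1,k}^0$ and pass to polar coordinates. In these coordinates the range ring $Y^r_{i,i-1,k}$ is the annulus $\{z:\underline{R}_{i,i-1,k}\le\|z\|_2\le\bar{R}_{i,i-1,k}\}$, and each slice $\mathcal{Z}^{q,r}_{i,i-1,k}$ is the annular sector whose polar angle lies in $[\tfrac{2\pi q}{M},\tfrac{2\pi(q+1)}{M}]$. I abbreviate $D:=d(\bar{\mathcal{Z}}_{i,k})$ and $\rho:=d(x_{i-1,k}^0,\bar{\mathcal{Z}}_{i,k})$, the latter read as the \emph{minimum} point-to-set distance, so that \aspref{asp:S_ik_relative_small} becomes $D<\underline{R}_{i,i-1,k}<\rho$. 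Since $\bar{\mathcal{Z}}_{i,k}$ is an extended constrained zonotope it is convex, and $\rho>\underline{R}_{i,i-1,k}>0$ means it avoids the inner disc; hence every point of $\bar{\mathcal{Z}}_{i,k}$ has norm at least $\rho$, and the angular footprint of $\bar{\mathcal{Z}}_{i,k}$ seen from the origin is a single interval $[\theta_{\min},\theta_{\max}]$. Both assertions reduce to controlling this footprint.

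For the contiguity claim I would pick $p_1\in\mathcal{Z}^{q_1,r}_{i,i-1,k}\cap\bar{\mathcal{Z}}_{i,k}$ and $p_2\in\mathcal{Z}^{q_2,r}_{i,i-1,k}\cap\bar{\mathcal{Z}}_{i,k}$ and use convexity: the chord $[p_1,p_2]$ lies in $\bar{\mathcal{Z}}_{i,k}$, so every point of it has norm at least $\rho>\underline{R}_{i,i-1,k}$; and since both endpoints lie in the (convex) disc of radius $\bar{R}_{i,i-1,k}$, the whole chord has norm at most $\bar{R}_{i,i-1,k}$. Thus the chord stays inside the annulus. The polar angle is monotone along any segment missing the origin (its angular derivative along the chord equals $(p_1\times p_2)/\|p\|_2^2$, a quantity of constant sign), so the angle sweeps the entire interval between the angles of $p_1$ and $p_2$. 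Consequently, for each $q$ with $q_1\le q\le q_2$ some chord point has angle in $[\tfrac{2\pi q}{M},\tfrac{2\pi(q+1)}{M}]$ and lies in the annulus, i.e. in $\mathcal{Z}^{q,r}_{i,i-1,k}\cap\bar{\mathcal{Z}}_{i,k}$, which proves the claim.

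For the quantitative bound I take $q_1,q_2$ to be the extreme intersecting indices and let $p_1,p_2\in\bar{\mathcal{Z}}_{i,k}$ realise $\theta_{\min},\theta_{\max}$. Then $\|p_1-p_2\|_2\le D$ while $\|p_1\|_2,\|p_2\|_2\ge\rho$, so the law of cosines gives, with $\beta:=\theta_{\max}-\theta_{\min}$,
\begin{equation*}
\cos\beta=\frac{\|p_1\|_2^2+\|p_2\|_2^2-\|p_1-p_2\|_2^2}{2\|p_1\|_2\|p_2\|_2}\ge 1-\frac{D^2}{2\rho^2},
\end{equation*}
where I bound $\tfrac{\|p_1\|_2^2+\|p_2\|_2^2}{2\|p_1\|_2\|p_2\|_2}\ge1$ by AM--GM and $\tfrac{D^2}{2\|p_1\|_2\|p_2\|_2}\le\tfrac{D^2}{2\rho^2}$. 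Because $D<\underline{R}_{i,i-1,k}<\rho$ forces $D^2/(2\rho^2)<1/2$, this yields $\cos\beta>1/2$, hence $\beta<\pi/3$. As all intersecting sectors lie within the footprint of width $\beta$, I conclude $\tfrac{2\pi(q_2-q_1)}{M}\le\beta<\pi/3$.

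The step I expect to be most delicate is matching the discrete index gap to the continuous footprint: the extreme angles $\theta_{\min},\theta_{\max}$ are only pinned down to within one sector width, so strictly one obtains $\tfrac{2\pi(q_2-q_1)}{M}\le\beta+\tfrac{2\pi}{M}$, and the stated bound $<\pi/3$ should be understood with $\beta<\pi/3$ as its geometric substance, the boundary sector being negligible in the intended large-$M$ regime. The remaining ingredients are routine: the AM--GM estimate makes the angle bound immediate with no optimisation over the radii, and the monotonicity of the polar angle along a chord is the only geometric lemma required. I would also make explicit from the outset that $d(x_{i-1,k}^0,\bar{\mathcal{Z}}_{i,k})$ is the minimum point-to-set distance, since it is exactly the inequality $\underline{R}_{i,i-1,k}<\rho$ (the set lying beyond the inner circle) that both keeps the chord inside the annulus and drives the $\pi/3$ bound.
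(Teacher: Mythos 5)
Your proof is correct and follows the same geometric skeleton as the paper's: take witness points in the two extreme intersecting sectors, bound the angle they subtend at $x_{i-1,k}^0$ using $d(\bar{\mathcal{Z}}_{i,k})<\underline{R}_{i,i-1,k}<d(x_{i-1,k}^0,\bar{\mathcal{Z}}_{i,k})$, and use convexity of $\bar{\mathcal{Z}}_{i,k}$ to show the chord joining them stays inside the annulus, so every intermediate sector is hit. The differences lie in the two supporting lemmas, and both of your choices are sharper. For the angle bound the paper argues via the law of sines that $EF$ is strictly the shortest side of $\triangle\mathrm{EOF}$, hence $\angle\mathrm{EOF}$ is the strictly smallest angle and is below $\pi/3$; your law-of-cosines estimate $\cos\beta\ge 1-D^2/(2\rho^2)>1/2$ reaches the same conclusion quantitatively. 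More substantively, to keep the chord inside the outer circle the paper asserts that $\triangle\mathrm{EOF}$ is an acute triangle (``oxygon''), which is not true in general (only the angle at $O$ is controlled; a base angle can be obtuse when $\|OE\|$ and $\|OF\|$ differ by nearly $\|EF\|$), whereas your observation that both endpoints lie in the convex disc of radius $\bar{R}_{i,i-1,k}$, hence so does the whole chord, is the correct justification of the step the paper actually needs; your explicit monotonicity of the polar angle along a chord missing the origin likewise fills in what the paper leaves implicit. Finally, your remark that one only obtains $\frac{2\pi(q_2-q_1)}{M}\le\beta+\frac{2\pi}{M}$ identifies a genuine one-sector discretization slack that the paper's proof shares and does not address: the stated strict bound $<\pi/3$ holds for the continuous angular footprint, but for the index gap only up to an additive $2\pi/M$.
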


\begin{proof}
Consider point $E \in \mathcal{Z}_{i,i-1,k}^{q_1,r} \cap \bar{\mathcal{Z}}_{i,k} $  and point $F \in \mathcal{Z}_{i,i-1,k}^{q_2,r} \cap \bar{\mathcal{Z}}_{i,k}$, and focus on the $\triangle \mathrm{OEF}$.
Notice that $\| EF\| \leq d(\bar{\mathcal{Z}}_{i,k})<\underline{R}_{i,i-1,k} $, according to the law of sines, $\angle \mathrm{EOF} < \frac{\pi}{3} $.

Moreover $\bar{\mathcal{Z}}_{i,k} $ is convex, hence   $\forall G \in \mathrm{EF} $, there holds $G \in \bar{\mathcal{Z}}_{i,k} $.
From $\underline{R}_{i,i-1,k} < d(x_{i-1,k}^0,\bar{\mathcal{Z} }_{i,k}) $  we have $\| OG \| \geq \underline{R}_{i,i-1,k} $.
Moreover, $\triangle \mathrm{EOF}$ is a oxygon, which implies
  $ \| OG \| < \| OF \| \leq \bar{R}_{i,i-1,k}$ and $G \in \mathcal{Z}_{i,i-1,k}^{q_2,r} \cap \bar{\mathcal{Z}}_{i,k},~\forall G \in \mathrm{EF}$.
   Hence
$\mathrm{EF} \subset \mathcal{Z}_{i,i-1,k}^{q,r} \cap \bar{\mathcal{Z}}_{i,k} $, \propref{prop:leq_frac_pi3} has been proved.
\end{proof}

\begin{figure}
  \centering
  \includegraphics[width=0.7\textwidth]{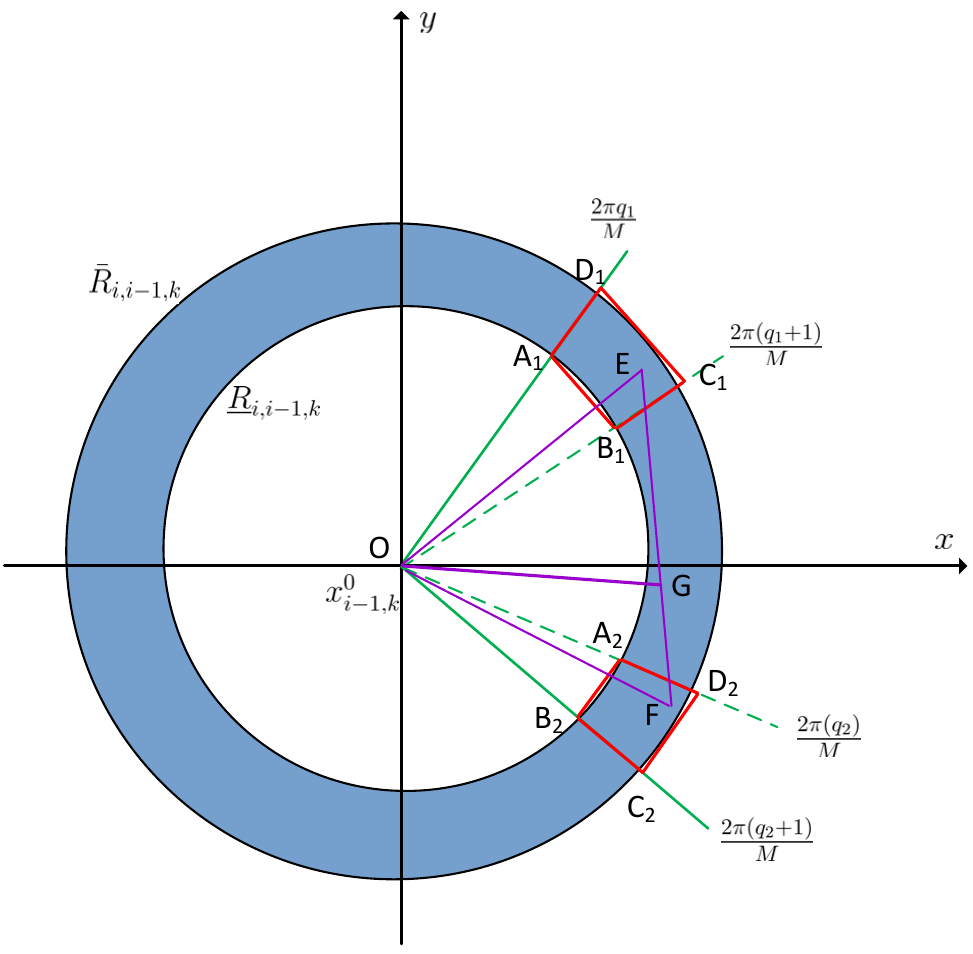}
  \caption{The diagram of proof of \propref{prop:leq_frac_pi3}.  }\label{fig:UWB_Position_Solution_sector_union}
\end{figure}

With \propref{prop:leq_frac_pi3}, the sectors can be united with the minimum phase angle $\frac{2 \pi q_1}{M} $ and maximum phase angle $\frac{2 \pi (q_2+1)}{M} $.

\begin{corollary}\label{cor:united_zonotope}
The  zonotope union
\begin{multline}
\underset{R_{i,i-1,k}^q(k) \cap \bar{S}_{i,k} \neq \emptyset}{\bigcup}R_{i,i-1,k}^q \subseteq  \mathcal{Z}^R_{i,i-1,k}(x_{i-1,k}^0)  
=(\hat{G}^R_{i,i-1,k}, \hat{c}^R_{i,i-1,k},\\
[H_q^R\hat{G}^R_{i,i-1,k} \quad (\sigma_2^R-\sigma_1^R)]
,\sigma_1^R-H_q^R x_{i-1,k}^0,~\hat{h}^R_{i,i-1,k} \times [0,1] ),
\end{multline}
where
\begin{align}
&\hat{G}^R_{i,i-1,k}= \begin{bmatrix}\mathrm{cos}(\frac{2 \pi (q_2+1)}{M} )&\mathrm{cos}(\frac{2 \pi q_1}{M} ) \\ \mathrm{sin}(\frac{2 \pi (q_2+1)}{M} )& \mathrm{sin}(\frac{2 \pi q_1}{M} )\end{bmatrix},
~\hat{c}^R_{i,i-1,k}=x_{i-1,k}^0, \\ \notag
&H_q^R=\begin{bmatrix} \mathrm{cos}(\frac{ \pi (q_1+q_2+1)}{M} ) & \mathrm{sin}(\frac{ \pi (q_1+q_2+1)}{M}) \end{bmatrix}, \\ \notag
&\sigma_1^R=\underline{R}_{i,i-1,k} \mathrm{cos}(\frac{(q_2-q_1+1) \pi }{M} )+H_q^R x_{i-1,k}^0,  
~\sigma_2^R=\bar{R}_{i,i-1,k}+H_q^R x_{i-1,k}^0,   \\ \notag
&h_{\tilde{R}^q}=[0, \frac{\bar{R}_{i,i-1,k}} {\mathrm{cos}(\frac{(q_2-q_1+1) \pi}{M} )} ]\times [0, \frac{\bar{R}_{i,i-1,k}} {\mathrm{cos}(\frac{(q_2-q_1+1) \pi}{M} )} ] .
\end{align}
\end{corollary}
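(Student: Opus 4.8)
The plan is to reduce the corollary to a single application of the trapezoidal outer-approximation already established in \lemref{lem:relative_position_CZ}, after first collapsing the union of the surviving ring segments into one annular sector by means of \propref{prop:leq_frac_pi3}.

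First I would invoke \propref{prop:leq_frac_pi3}: under \aspref{asp:S_ik_relative_small} the indices $q$ for which $R^q_{i,i-1,k} \cap \bar{S}_{i,k} \neq \emptyset$ form the contiguous block $\{q_1, q_1+1, \ldots, q_2\}$. Consequently the union on the left-hand side is exactly the union over this block; since all the $R^q$ share the common radial range $[\underline{R}_{i,i-1,k}, \bar{R}_{i,i-1,k}]$ and tile the angular range in steps of $2\pi/M$ with consecutive segments sharing a boundary ray, this union equals the single annular sector $R^\star$ with radial range $[\underline{R}_{i,i-1,k}, \bar{R}_{i,i-1,k}]$ and angular span $[\frac{2\pi q_1}{M}, \frac{2\pi(q_2+1)}{M}]$. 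This reduces the claim to outer-bounding one annular sector.

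Next I would apply the geometric construction of \lemref{lem:relative_position_CZ} verbatim to $R^\star$, replacing the bearing-induced angles $\underline{\theta}_{i,i-1,k}, \bar{\theta}_{i,i-1,k}$ by the endpoints $\frac{2\pi q_1}{M}$ and $\frac{2\pi(q_2+1)}{M}$ of $R^\star$. The two unit radial directions at these extreme angles constitute the columns of the generator matrix $\hat{G}^R_{i,i-1,k}$, the centre is $\hat{c}^R_{i,i-1,k} = x^0_{i-1,k}$, and the box $h_{\tilde{R}^q}$ supplies the radial extent, so that the resulting triangle covers the whole angular wedge out to radius $\bar{R}_{i,i-1,k}$. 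Intersecting this triangle with the hyperplane slab of normal $H^R_q$ — oriented along the bisecting angle $\frac{\pi(q_1+q_2+1)}{M}$ — between the levels $\sigma_1^R$ and $\sigma_2^R$ carves out the trapezoid containing $R^\star$. Writing that slab as a single-generator zonotope and forming the generalized intersection of \lemref{lem:CZ_properties} then produces exactly the quintuple asserted in the corollary.

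The main obstacle is verifying that this trapezoid genuinely outer-bounds the annular sector, which hinges on the angular span being small. The inner-arc level $\sigma_1^R$ must not slice into the sector and the outer arc must lie below $\sigma_2^R$; both require the half-span $\frac{(q_2-q_1+1)\pi}{M}$ to stay below $\pi/2$, so that the relevant cosines remain positive and the projection of the inner arc onto $H^R_q$ is minimised at the angular midpoint. This is precisely where \propref{prop:leq_frac_pi3} is essential: its bound $\frac{2\pi(q_2-q_1)}{M} < \frac{\pi}{3}$, together with one extra segment of width $2\pi/M$ that is negligible for the chosen $M$, keeps the half-span safely inside $(0, \pi/2)$, so the cosine factors appearing in $\sigma_1^R$ and in $h_{\tilde{R}^q}$ are well-defined and positive and the containment is valid. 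The remaining step — checking that every point of $R^\star$ satisfies the two slab inequalities — is the routine trigonometric verification carried over unchanged from the proof of \lemref{lem:relative_position_CZ}.
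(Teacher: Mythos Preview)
Your proposal is correct and follows essentially the same route the paper intends: the paper states this result as a corollary without an explicit proof, noting only that ``with \propref{prop:leq_frac_pi3}, the sectors can be united with the minimum phase angle $\frac{2\pi q_1}{M}$ and maximum phase angle $\frac{2\pi(q_2+1)}{M}$,'' after which the construction of \lemref{lem:relative_position_CZ} is applied to the merged sector. Your write-up makes this two-step reduction explicit and even adds the useful observation that the half-span bound from \propref{prop:leq_frac_pi3} is what keeps the cosine factors positive and the trapezoidal outer bound valid.
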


It should be pointed out that $\mathcal{Z}^R_{i,i-1,k}(x_{i-1,k}^0)$ is correlated with the choice of $x_{i-1,k}^0 $, a different choice  $x_{i-1,k}^1 \neq x_{i-1,k}^0 $ would affect $q_1$ and $q_2$ (see \figref{fig:UWB_Position_Solution_center_correlation} ), hence calculating $\hat{\mathcal{Z}}_{i-1,k} \oplus Y^r_{i,i-1,k}$  directly is with huge computation burden.

%\begin{figure}[htbp]
%  \centering
%  \includegraphics[width=0.45\textwidth]{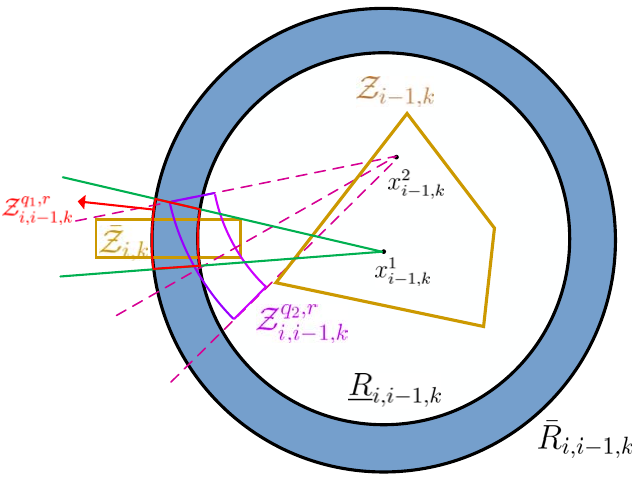}
%  \caption{The calculation of $\mathcal{Z}_{i,i-1,k}^{q,r}$ relies on the center point choice of $x_{i-1,k} $.  }\label{fig:UWB_Position_Solution_center_correlation}
%\end{figure}
%\begin{figure}[htbp]
%  \centering
%  \includegraphics[width=0.4\textwidth]{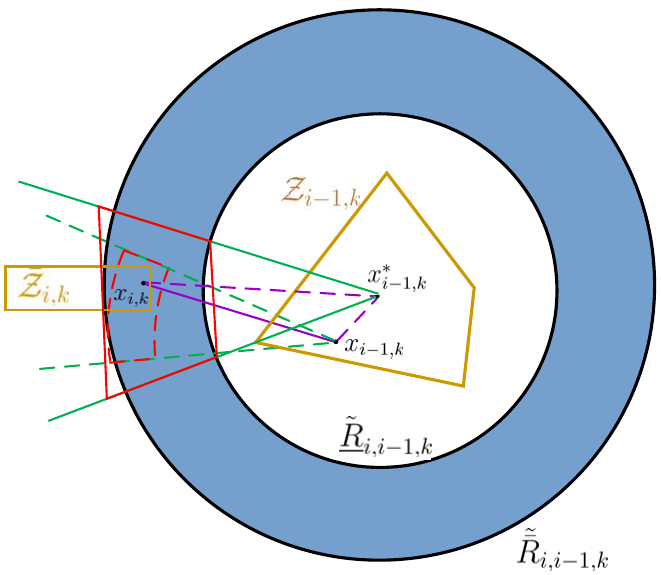}
%  \caption{The diagram of equivalence measurement noise.  }\label{fig:UWB_Position_Solution_center_triangle}
%\end{figure}

\begin{assumption}\label{asp:S_i_1k_relative_small}
The diameter $d(\hat{\mathcal{Z}}_{i-1,k} )$ satisfies
\begin{align}
d(\hat{\mathcal{Z}}_{i-1,k} ) \ll y_{i,i-1,k}.
\end{align}
\end{assumption}
\begin{remark}
\aspref{asp:S_ik_relative_small} and \aspref{asp:S_i_1k_relative_small}  imply $x_{i-1,k}$ is far from $x_{i,k} $,  this assumption is reasonable since the relative distance between agents is often much larger than the uncertain range caused by noise.
%when $d(x_{i-1,k}^0,\bar{\mathcal{Z}}_{i,k}) \gg d(\bar{\mathcal{Z}}_{i,k}) $,
\end{remark}

\begin{theorem}\label{thm:equivalent_measurement}
Under  \aspref{asp:S_ik_relative_small} and \aspref{asp:S_i_1k_relative_small}, there holds
\begin{align}
 (\hat{\mathcal{Z}}_{i-1,k} \oplus Y^r_{i,i-1,k} ) \cap \bar{\mathcal{Z}}_{i,k}
%\underset{x_{i-1,k}\in \hat{\mathcal{Z}}_{i-1,k} }{\bigcup} \mathcal{Z}_{i,i-1,k}^R(x_{i-1,k} ),
 \subseteq \tilde{\mathcal{Z}}_{i,i-1,k}^R(x_{i-1,k}^* )
\end{align}
where
\begin{multline}\label{eqn:final_extended_cZ}
\tilde{\mathcal{Z}}_{i,i-1,k}^R(x_{i-1,k}^* ) =\{
z=[\tilde{G}_{i,i-1,k}^R~0]\begin{bmatrix}\xi_1 \\ \xi_2 \end{bmatrix}+\tilde{c}_{i,i-1,k}^R, \\
 \begin{bmatrix} \tilde{H}\tilde{G}_{i,i-1,k}^R & (\tilde{\sigma}_1-\tilde{\sigma}_2) \end{bmatrix}
  \begin{bmatrix}\xi_1 \\ \xi_2 \end{bmatrix}
 =\tilde{\sigma}_1-\tilde{H} \tilde{c}_{i,i-1,k}^R,
 \begin{bmatrix}\xi_1 \\ \xi_2 \end{bmatrix}\in \tilde{h}_{i,i-1,k}^R\times [0,1]\},
\end{multline}
and  $\tilde{q}_1 $, $\tilde{q}_2$ are defined in \propref{prop:leq_frac_pi3}, replacing $\bar{R}_{i,i-1,k} $ and $\underline{R}_{i,i-1,k}$ with
\begin{align}
\tilde{\bar{R}}_{i,i-1,k}=d(\hat{\mathcal{Z}}_{i-1,k})+\bar{R}_{i,i-1,k} ,\quad
\tilde{\underline{R}}_{i,i-1,k}=\underline{R}_{i,i-1,k}-d(\hat{\mathcal{Z}}_{i-1,k}) 
\end{align}
and
\begin{align}\label{eqn:final_cZ_parameters}
&\tilde{G}_{i,i-1,k}^R= \begin{bmatrix}\mathrm{cos}(\frac{2 \pi (\tilde{q}_2+1)}{M} )&\mathrm{cos}(\frac{2 \pi \tilde{q}_1}{M} ) \\ \mathrm{sin}(\frac{2 \pi (\tilde{q}_2+1)}{M} )& \mathrm{sin}(\frac{2 \pi \tilde{q}_1}{M} )\end{bmatrix};
~\tilde{c}_{i,i-1,k}^R=\forall x_{i-1,k}^* \in \hat{\mathcal{Z}}_{i-1,k}, \\ \notag
&\tilde{\sigma}_1=(\tilde{\underline{R}}_{i,i-1,k}) \mathrm{cos}(\frac{(\tilde{q}_2-\tilde{q}_1+1) \pi }{M} ) +\tilde{H}x_{i-1,k}^*,~\tilde{\sigma}_2=\tilde{\bar{R}}_{i,i-1,k}+\tilde{H}x_{i-1,k}^*, \\ \notag
&\tilde{H}=\begin{bmatrix} \mathrm{cos}(\frac{ \pi (\tilde{q}_1+\tilde{q}_2+1)}{M} ) & \mathrm{sin}(\frac{ \pi (\tilde{q}_1+\tilde{q}_2+1)}{M}) \end{bmatrix}, \\ \notag
&\tilde{h}_{i,i-1,k}^R=[0, \frac{\tilde{\bar{R}}_{i,i-1,k}} {\mathrm{cos}(\frac{(\tilde{q}_2-\tilde{q}_1+1) \pi}{M} )} ]\times [0, \frac{\tilde{\bar{R}}_{i,i-1,k}} {\mathrm{cos}(\frac{(\tilde{q}_2-\tilde{q}_1+1) \pi}{M} )} ].
\end{align}
\end{theorem}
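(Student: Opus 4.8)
The plan is to reduce the Minkowski sum over the whole set $\hat{\mathcal{Z}}_{i-1,k}$ to a single ring centered at an arbitrary fixed point $x_{i-1,k}^*$, at the cost of slightly inflating the ring radii, and then to invoke the fixed-center machinery already established in \propref{prop:leq_frac_pi3} and \corref{cor:united_zonotope}.

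First I would establish the key inclusion
\begin{align}
\hat{\mathcal{Z}}_{i-1,k} \oplus Y^r_{i,i-1,k} \subseteq \{x_{i-1,k}^*\} \oplus \tilde{Y}^r_{i,i-1,k},
\end{align}
where $\tilde{Y}^r_{i,i-1,k}$ is the ring with inner radius $\tilde{\underline{R}}_{i,i-1,k}$ and outer radius $\tilde{\bar{R}}_{i,i-1,k}$. Take any $z$ in the left-hand side, so $z=c+s$ with $c\in\hat{\mathcal{Z}}_{i-1,k}$ and $\underline{R}_{i,i-1,k}\le\|s\|_2\le\bar{R}_{i,i-1,k}$. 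Since $x_{i-1,k}^*\in\hat{\mathcal{Z}}_{i-1,k}$ one has $\|c-x_{i-1,k}^*\|_2\le d(\hat{\mathcal{Z}}_{i-1,k})$, and the triangle inequality applied to $z-x_{i-1,k}^*=s+(c-x_{i-1,k}^*)$ yields
\begin{align}
\underline{R}_{i,i-1,k}-d(\hat{\mathcal{Z}}_{i-1,k})\le\|z-x_{i-1,k}^*\|_2\le\bar{R}_{i,i-1,k}+d(\hat{\mathcal{Z}}_{i-1,k}),
\end{align}
which is precisely $z\in\{x_{i-1,k}^*\}\oplus\tilde{Y}^r_{i,i-1,k}$ for the radii defined in the statement. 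This is the heart of the argument: fixing the center and inflating the ring removes the dependence on the running center $x_{i-1,k}^0$ that made the direct computation intractable (cf.\ the remark preceding the theorem).

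Intersecting both sides with $\bar{\mathcal{Z}}_{i,k}$ preserves the inclusion, so it remains to over-approximate $(\{x_{i-1,k}^*\}\oplus\tilde{Y}^r_{i,i-1,k})\cap\bar{\mathcal{Z}}_{i,k}$ by a single extended constrained zonotope. I would carry this out exactly as in the fixed-center construction: segment the inflated ring into $M$ sectors, discard those not meeting $\bar{\mathcal{Z}}_{i,k}$, and apply \propref{prop:leq_frac_pi3} to conclude that the surviving indices form a contiguous block $[\tilde{q}_1,\tilde{q}_2]$ with angular span below $\pi/3$. Then \corref{cor:united_zonotope}, read with the inflated radii $\tilde{\underline{R}}_{i,i-1,k}$, $\tilde{\bar{R}}_{i,i-1,k}$ and the fixed center $x_{i-1,k}^*=\tilde{c}_{i,i-1,k}^R$, produces exactly the representation \eqref{eqn:final_extended_cZ} with the parameters \eqref{eqn:final_cZ_parameters}.

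The main obstacle is verifying that \propref{prop:leq_frac_pi3} remains applicable after the inflation, i.e.\ that the inflated ring still satisfies the geometric separation
\begin{align}
d(\bar{\mathcal{Z}}_{i,k})<\tilde{\underline{R}}_{i,i-1,k}<d(x_{i-1,k}^*,\bar{\mathcal{Z}}_{i,k}).
\end{align}
This is where \aspref{asp:S_i_1k_relative_small} enters: because $d(\hat{\mathcal{Z}}_{i-1,k})\ll y_{i,i-1,k}$, the inflation $\tilde{\underline{R}}=\underline{R}-d(\hat{\mathcal{Z}}_{i-1,k})$ and $\tilde{\bar{R}}=\bar{R}+d(\hat{\mathcal{Z}}_{i-1,k})$ perturbs the radii only negligibly, so the chain of inequalities underlying \aspref{asp:S_ik_relative_small} survives and the contiguity and $\pi/3$ conclusions of \propref{prop:leq_frac_pi3} transfer verbatim to the inflated, fixed-center ring. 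I would finally note that $x_{i-1,k}^*$ may be taken as any point of $\hat{\mathcal{Z}}_{i-1,k}$, consistent with the reading $\tilde{c}_{i,i-1,k}^R=\forall\,x_{i-1,k}^*\in\hat{\mathcal{Z}}_{i-1,k}$ in \eqref{eqn:final_cZ_parameters}; the over-approximation holds for every such choice, which completes the proof.
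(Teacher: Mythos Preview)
Your proposal is correct and follows essentially the same route as the paper: the triangle inequality is used to show that replacing the Minkowski sum over $\hat{\mathcal{Z}}_{i-1,k}$ by a single center $x_{i-1,k}^*$ with inflated radii $\tilde{\underline{R}}_{i,i-1,k},\tilde{\bar{R}}_{i,i-1,k}$ gives an outer bound, after which the fixed-center construction of \lemref{lem:relative_position_CZ}/\corref{cor:united_zonotope} is invoked verbatim. Your write-up is in fact more complete than the paper's, since you explicitly flag the need to re-verify the hypothesis of \propref{prop:leq_frac_pi3} for the inflated ring and identify \aspref{asp:S_i_1k_relative_small} as the ingredient that makes this go through, whereas the paper simply states ``the rest proof is same with \lemref{lem:relative_position_CZ} and \corref{cor:united_zonotope}.''
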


\begin{proof}
For a given point $x_{i,k} \in \bar{\mathcal{Z}}_{i,k}$, there holds
\begin{multline}\label{eqn:range_measurement_sup}
   \|x_{i-1,k}^*-x_{i,k}  \|_2
\leq \|x_{i-1,k}^*-x_{i-1,k}  \|_2 + \|x_{i-1,k}-x_{i,k}  \|_2  \\
 \leq d(\hat{\mathcal{Z}}_{i-1,k}) +\bar{R}_{i,i-1,k} =\tilde{\bar{R}}_{i,i-1,k}.
\end{multline}
Similarly, $\tilde{\underline{R}}_{i,i-1,k}=\underline{R}_{i,i-1,k}-d(\hat{\mathcal{Z}}_{i-1,k}) $ and
$\|x_{i-1,k}^*-x_{i,k}  \|_2 \geq \tilde{\underline{R}}_{i,i-1,k}$
%\begin{align}\label{eqn:range_measurement_inf}
%\end{align}
%
holds. Hence, $x_{i,k} \in \{s:\tilde{\underline{R}}_{i,i-1,k} \leq \| s\|_2 \leq  \tilde{\bar{R}}_{i,i-1,k}   \} $.
The rest proof is  same with \lemref{lem:relative_position_CZ} and \corref{cor:united_zonotope}.
\end{proof}

\begin{figure}[htbp]
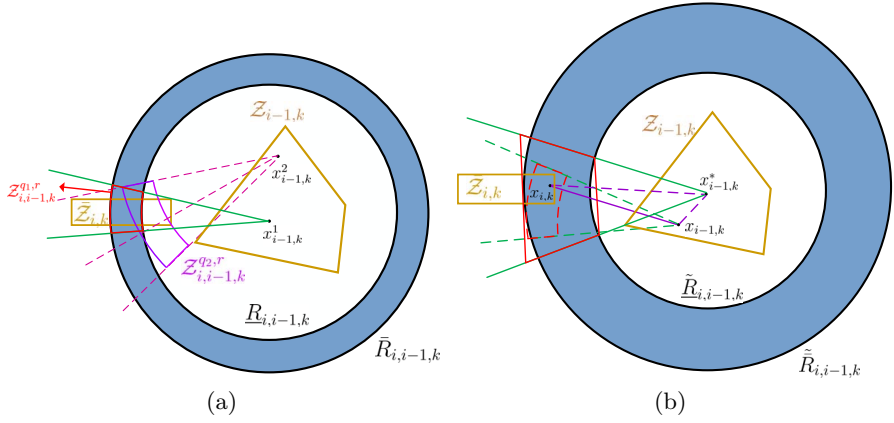

\centering
\subfigure[]{\includegraphics [width=0.48\columnwidth, trim = 0 0 0 0]{UWB_Position_Solution_center_correlation.pdf}\label{fig:UWB_Position_Solution_center_correlation}}
\subfigure[]{\includegraphics [width=0.48\columnwidth, trim = 0 0 0 0]{UWB_Position_Solution_center_triangle.pdf}\label{fig:UWB_Position_Solution_center_triangle}}
\caption{
(a) The calculation of $\mathcal{Z}_{i,i-1,k}^{q,r}$ relies on the center point choice of $x_{i-1,k} $.
(b) The diagram of equivalence relative range measurement noise.
}\label{fig:UAV_Position_Estimation}
\end{figure}

To summarize, given the posterior uncertain range  $\hat{\mathcal{Z}}_{i-1,k} $ of agent $i-1$, prior uncertain range $\bar{\mathcal{Z}}_{i,k} $ and measurements $y_{i,k} $, $y_{i,i-1,k}^r$, the posterior extended constrained zonotope $\hat{\mathcal{Z}}_{i,k}$ (line 7 in \algref{alg:Sequential_SMFing_Framework}) can be calculated as follows:

\begin{algorithm}
\caption{Posterior Constrained Zonotope Calculation for Agent i}
\label{alg:Sequential_CZ_algorithm}
\begin{algorithmic}[1]
\STATE Let $x_{i-1,k}^*=\hat{c}_{i-1,k} $, then the diameter of $ \hat{ \mathcal{Z}}_{i-1,k}$ is given by $d(\hat{ \mathcal{Z}}_{i-1,k})=\underset{\|\xi\| \leq 1 }{\mathrm{sup}} \| \hat{G}_{i-1,k} \xi\|_{\infty} \leq \| \hat{G}_{i-1,k} \|_{\infty}$.
\STATE The equivalent relative range measurement set is given by the ring $\tilde{Y}^r_{i,i-1,k}=\{s: \tilde{\underline{R}}_{i,i-1,k} \leq  \| s \|_2   \leq \tilde{\bar{R}}_{i,i-1,k} \} $, where $\tilde{\underline{R}}_{i,i-1,k}=y^r_{i,i-1,k}-\bar{r}_{i,i-1,k}-d(\hat{ \mathcal{Z}}_{i-1,k}) $ and
$\tilde{\bar{R}}_{i,i-1,k}=y^r_{i,i-1,k}-\underline{r}_{i,i-1,k}+d(\hat{ \mathcal{Z}}_{i-1,k}) $.
\STATE Segment $\tilde{Y}^r_{i,i-1,k} $ into $M$ parts. Each part is outer bounded by the extended constrained zonotope $Z_{i,i-1,k}^{q,r}$ as in \eqref{eqn:segment_part_cZrepresentation}.
\LOOP
\STATE $Z_{i-i-1,k}^{q,r} \cap \bar{\mathcal{Z}}_{i,k} ,~q \in [0,M] $.
\STATE $q=q+1$.
\ENDLOOP
\STATE Let $q_1=\underset{q}{\mathrm{inf}} ~ Z_{i-i-1,k}^{q,r} \cap \bar{\mathcal{Z}}_{i,k} \neq \emptyset  $ and
$q_2=\underset{q} {\mathrm{sup} }~Z_{i-i-1,k}^{q,r} \cap \bar{\mathcal{Z}}_{i,k} \neq \emptyset  $, the extended constrained zonotope $\tilde{\mathcal{Z}}_{i,i-1,k}^R$ given by the relative range measurement $y_{i,i-1,k}^r$ can be represented as \eqref{eqn:final_extended_cZ} and \eqref{eqn:final_cZ_parameters}.
\STATE Calculate $\mathcal{Z}_{i,k}=\bar{\mathcal{Z}}_{i,k} \cap \tilde{\mathcal{Z}}_{i,i-1,k}^R \cap \mathcal{X}_{i,k} (C_i,y_{i,k},\[\mathbf{v}_{i,k} \] ) $ with \eqref{eqn:CZ_linear_transform}.
\STATE Calculate the interval hull of  $\mathcal{Z}_{i,k} $ as the posterior uncertain range $\hat{\mathcal{Z}}_{i,k}$ of agent $i$.
\end{algorithmic}
\end{algorithm}

\section{Simulation Results}\label{sec:Simulation}
In this section, we consider four UAVs in a 2-D plane with the  topology shown in \figref{fig:chain_topology}.
 UAV1 is the anchor node, while UAVs~2,~3, and 4 are equipped with low precision absolute sensors.
The state of each UAV is  $x_{i,k}=[p_{i,k,x}^T,p_{i,k,y}^T,\tilde{v}_{i,k,x}^T,\tilde{v}_{i,k,y}^T ]^T$.\footnote{To distinguish with the absolute measurement noise $v_{i,k}$, in this section we use   tilde $\tilde{v}_{i,k} $ to denote the velocity of UAV $i$ at time instant $k$.}
The aim of this simulation is to verify the uncertain range derived by our algorithm contains the real position of each UAV.
\begin{figure}
  \centering
  \includegraphics[width=0.6\columnwidth]{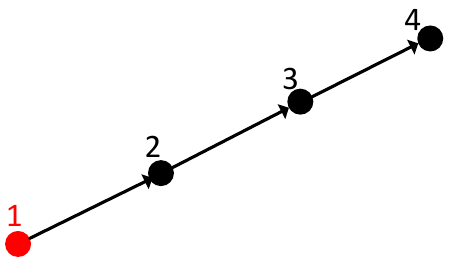}
  \caption{Simulation topology. }\label{fig:chain_topology}
\end{figure}
The parameters are settled as follows:
\begin{align}
A_i&=\begin{bmatrix}1 & T \\ 0 &1 \end{bmatrix} \otimes I_2 , B_i=\begin{bmatrix}\frac{T^2}{2} \\ T \end{bmatrix} \otimes \mathbf{1}_2,C_i=I_4\footnote{$\otimes $ denotes the kronecker product. }.
%A_i=\begin{bmatrix}1 & 0 & T & 0 \\ 0 &1 & 0 & T \\ 0 & 0 & 1 & 0 \\ 0 & 0& 0& 1 \end{bmatrix},
\end{align}
where $T=0.5s$.
The process noise for each UAV is
\begin{align}
\[\mathbf{w}_{i,k} \]&=\{w=0.1*I_4\xi,~ \|\xi\|_{\infty}\leq 1   \},
\end{align}
while the  measurement noise are settled as:
\begin{align}
&\[\mathbf{v}_{1,k} \]=\{\mathbf{v}_1=0.1 *I_4\xi, ~ \|\xi\|_{\infty}\leq 1   \}; \\ \notag
&\[\mathbf{v}_{2,k} \]=\[\mathbf{v}_{3,k} \]=\[\mathbf{v}_{4,k} \]=\{v=I_4\xi, ~ \|\xi\|_{\infty}\leq 1   \}; \\ \notag
&\[\mathbf{r}_{i,i-1,k} \]=[-0.1,0.1],~i=2,3,4.
\end{align}
The initial states and uncertain ranges of each UAV are randomly generated by Matlab.

Simulation results of position estimation of UAV4 are presented in \figref{fig:UAV4_Position_Estimation}, where the black line is the real trajectory, with the dots representing the real positions at different time steps.
The true state at each time step is contained by  the   uncertain range (rectangles)  derived by our proposed algorithm, which implies the effectiveness of our method in deriving uncertain ranges along the chain.

To demonstrate our method effectively improves estimation accuracy, we make a Monte Carlo experiment of 50 times, and compare our proposed methods with pure absolute measurement filtering method.
To measure the conservativeness directly, we consider using the diameter of sets, i.e.,
\begin{align}\label{eqn:diameter_of_set}
d(S)=\underset{s_1,s_2 \in S  }{\mathrm{sup}}\|s_1-s_2  \|_{\infty}.
\end{align}

For an interval hull in 2-D plane, $d(S)$ is the maximum edge length, which is the twice of the  infinite norm of  $\hat{G}$ matrix.
Thus we compare $\|G\|_{\infty}$ of the interval hull derived by our proposed method with pure absolute measurement filtering, as shown in Tabel~1.
%\tabref{tab:Conservativeness_comparison_with_pure_absolute_measurement_filtering}.
%
\begin{table}[!htbp]
\centering
\label{tab:Conservativeness_comparison_with_pure_absolute_measurement_filtering}
\caption{Conservativeness comparison with pure absolute measurement filtering}
\vspace{5pt}
\begin{tabular}{l|ccc}
\hline
 &UAV2 &UAV3 &UAV4  \\
\hline
$\| G_{pro} \|_{\infty}/ \| G_{abs} \|_{\infty} $ &0.4928 &0.5695 &0.7066 \\
\hline
\end{tabular}
\end{table}

From Tabel~1
%\tabref{tab:Conservativeness_comparison_with_pure_absolute_measurement_filtering} 
we can see that for UAV2, 3 and 4, estimation accuracy improved by our proposed method is up to $50.72\% $, $44.05\% $ and $29.44\% $, respectively.

\begin{figure}[htbp]
  \centering
  \includegraphics[width=0.75\textwidth]{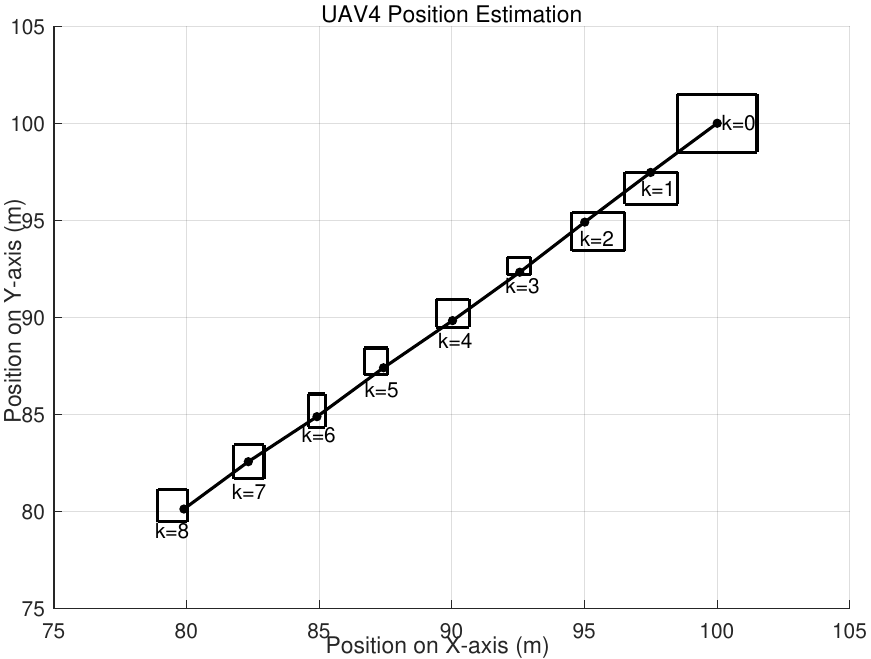}
  \caption{Position estimation simulation results with proposed algorithm.}\label{fig:UAV4_Position_Estimation}
\end{figure}

\section{Conclusion}\label{sec:Conclusion}

In this article, we have studied the distributed estimation problem of a multi-agent system with absolute and relative range measurements.
Parts of agents are equipped with high accuracy absolute measurements, while the other agents need to utilize low accuracy absolute measurements and relative range measurements to realize guaranteed estimation.
Firstly, we analyzed the set-membership filtering framework, where each agent in the chain topology can derive its uncertain range sequentially, such that agents with low accuracy can benefit from the high accuracy absolute measurements of anchors and improve estimation precisions.
Moreover, an algorithm based on the extended constrained zonotope has been developed to deal with the relative range measurements, such that each agent can derive its uncertain range in a distributed manner.
%we  proposed the distributed set-membership filtering framework for a system with chain topology.
%
%
%
Finally, several simulation results were presented, which  corroborate the effectiveness of our proposed algorithm  and verify our method achieves siginificant accuracy improvement in estimation.

For future work, we aim to focus on the estimation error boundedness analysis of our proposed algorithm.

%

%\paragraph{Notes and Comments.}
%The first results on subharmonics were
%obtained by Rabinowitz in \cite{fo:kes:nic:tue}, who showed the existence of
%infinitely many subharmonics both in the subquadratic and superquadratic
%case, with suitable growth conditions on $H'$. Again the duality
%approach enabled Clarke and Ekeland in \cite{may:ehr:stein} to treat the
%same problem in the convex-subquadratic case, with growth conditions on
%$H$ only.
%
%Recently, Michalek and Tarantello (see \cite{fost:kes} and \cite{czaj:fitz})
%have obtained lower bound on the number of subharmonics of period $kT$,
%based on symmetry considerations and on pinching estimates, as in
%Sect.~5.2 of this article.

%\bibliographystyle{IEEEtran}

%\bibliography{ArticleRegulation0222}

\begin{thebibliography}{6}
%%
\bibitem{2019Cooperative}
K.~Gu and Y.~Shen, ``Cooperative detection via direct localization in mobile
  multi-agent networks,'' in \emph{ICASSP 2019 - 2019 IEEE International
  Conference on Acoustics, Speech and Signal Processing (ICASSP)}, 2019.

\bibitem{2017Distributed}
W.~S. Rossi, P.~Frasca, and F.~Fagnani, ``Distributed estimation from relative
  and absolute measurements,'' \emph{IEEE Trans. Autom. Control}, vol.~62,
  no.~12, pp. 6385--6391, 2017.

\bibitem{SelfLocalization2015}
Z.~Lin, M.~Fu, and Y.~Diao, ``Distributed self localization for relative
  position sensing networks in 2d space,'' \emph{IEEE Trans. Signal Process.},
  vol.~63, no.~14, pp. 3751--3761, 2015.

\bibitem{2016Distributed}
C.~Lin, Z.~Lin, R.~Zheng, G.~Yan, and G.~Mao, ``Distributed source localization
  of multi-agent systems with bearing angle measurements,'' \emph{IEEE Trans.
  Autom. Control}, vol.~61, no.~4, pp. 1105--1110, 2016.

\bibitem{2018Integrated}
Z.~Han, K.~Guo, L.~Xie, and Z.~Lin, ``Integrated relative localization and
  leader–follower formation control,'' \emph{IEEE Trans. on Autom. Control},
  2018.

\bibitem{Viegas2018}
D.~Viegas, P.~Batista, P.~Oliveira, and C.~Silvestre, ``Discrete-time
  distributed kalman filter design for formations of autonomous vehicles,''
  \emph{Control Eng. Practice}, vol.~75, pp. 55--68, 2018.

\bibitem{2019distributed}
W.~Li, Y.~Jia, and J.~Du, ``Distributed kalman filter for cooperative
  localization with integrated measurements,'' \emph{IEEE Trans. Aerosp.
  Electron. Syst.}, vol.~56, no.~4, pp. 3302--3310, 2020.

\bibitem{2018Cooperative}
S.~Battilotti, F.~Cacace, angelo, and A.~Germani, ``Cooperative filtering with
  absolute and relative measurements,'' in \emph{2018 IEEE Conference on
  Decision and Control (CDC)}, 2018.

\bibitem{Silvestre2022}
D.~Silvestre, ``Constrained convex generators: A tool suitable for set-based
  estimation with range and bearing measurements,'' \emph{IEEE Control Syst.
  Lett.}, vol.~6, pp. 1610--1615, 2022.

\bibitem{2023DSMF}
Y.~Ding, Y.~Cong and X.~Wang, ``Distributed set-membership filtering frameworks for
  multi-agent systems with absolute and relative measurements,''
  \emph{arXiv:2305.15797}, May 2023.

\bibitem{2023SMFCooperative}
Y.~Ding, Y.~Cong, and X.~Wang, ``Set-membership filtering-based cooperative state estimation for
  multi-agent systems,'' \emph{arXiv:2305.10366}, May 2023.

\bibitem{Girish2013}
G.~N. Nair, ``A nonstochastic information theory for communication and state
  estimation,'' \emph{IEEE Trans. Autom. Control}, vol.~58, no.~6, pp.
  1497--1510, 2013.

\bibitem{2009Distributed}
U.~A. Khan, S.~Kar, and J.~M.~F. Moura, ``Distributed sensor localization in
  random environments using minimal number of anchor nodes,'' \emph{IEEE Trans.
  Signal Process.}, vol.~57, no.~5, pp. 2000--2016, 2009.

\bibitem{Tian2007A}
S.~Tian, X.~Zhang, X.~Wang, P.~Sun, and H.~Zhang, ``A selective anchor node
  localization algorithm for wireless sensor networks,'' in \emph{Int.
  Conf. on Convergence Inform. Technol.}, 2007.

\bibitem{2021Rethinking}
Y.~Cong, X.~Wang, and X.~Zhou, ``Rethinking the mathematical framework and
  optimality of set-membership filtering,'' \emph{IEEE Trans.
  Autom. Control}, vol.~67, no.~5, pp. 2544--2551, 2022.

\bibitem{2013Understanding}
B.~Huang, C.~Yu, and B.~Anderson, ``Understanding error propagation in multihop
  sensor network localization,'' \emph{IEEE Trans. Ind. Electron.}, vol.~60,
  no.~12, pp. 5811--5819, 2013.

\bibitem{2022YiruiCong}
Y.~Cong, X.~Wang, and X.~Zhou, ``Stability of linear set-membership filters,''
  \emph{arXiv:2203.13966}, Mar. 2022.

\bibitem{Scott2016}
J.~K. Scott, D.~M. Raimondo, G.~R. Marseglia, and R.~D. Braatz, ``Constrained
  zonotopes: A new tool for set-based estimation and fault detection,''
  \emph{Automatica}, vol.~69, pp. 126--136, 2016.

\end{thebibliography}
%
% ---- Bibliography ----
%

\end{document}